\title{Graded modal logic and counting message passing automata} %TODO Please add
\date{} 					% Or removing it
\author[ \hspace{-1ex}]{Veeti Ahvonen}
\author[ \hspace{-1ex}]{Damian Heiman}
\author[ \hspace{-1ex}]{Antti Kuusisto}
\affil[ \hspace{-1ex}]{Tampere University, Finland}
\theoremstyle{plain}
\newtheorem{theorem}{Theorem}[section]
\newtheorem{lemma}[theorem]{Lemma}
\newtheorem{proposition}[theorem]{Proposition}
\newtheorem{corollary}[theorem]{Corollary}
\theoremstyle{definition}
\newtheorem{remark}[theorem]{Remark}
\newcommand{\N}{\mathbb N}
\newcommand{\Z}{\mathbb Z}
\newcommand{\bk}{\mathbf{k}}
\newcommand{\bl}{\mathbf{l}}
\newcommand{\bo}{\mathbf{o}}
\newcommand{\cA}{\mathcal{A}}
\newcommand{\cC}{\mathcal{C}}
\newcommand{\cK}{\mathcal{K}}
\newcommand{\cM}{\mathcal{M}}
\newcommand{\cN}{\mathcal{N}}
\newcommand{\cT}{\mathcal{T}}
\newcommand{\fm}{\mathfrak{m}}
\newcommand*{\abs}[1]{\lvert#1\rvert}   % itseisarvo
\newcommand{\msc}{\mathrm{MSC}}
\newcommand{\md}{\mathrm{md}}
\newcommand{\cmmpa}{\mathrm{CMMPA}}
\newcommand{\dom}{\mathrm{DOM}}
\newcommand{\GML}{\mathrm{GML}}
\newcommand{\GMML}{\mathrm{GMML}}
\newcommand{\GFP}{\mathrm{GFP}}
\newcommand{\width}{\mathrm{width}}
\date{}
\begin{document}

\maketitle

\begin{center}
    { email: \url{firstname.lastname@tuni.fi}} 
    \vspace{1em}
\end{center}

\begin{abstract}
    We examine the relationship of graded (multi)modal logic to counting (multichannel) message passing automata with applications to the Weisfeiler-Leman algorithm. We introduce the notion of graded multimodal types, which are formulae of graded multimodal logic that encode the local information of a pointed Kripke-model. We also introduce message passing automata that carry out a generalization of the Weisfeiler-Leman algorithm for distinguishing non-isomorphic graph nodes. We show that the classes of pointed Kripke-models recognizable by these automata are definable by a countable (possibly infinite) disjunction of graded multimodal formulae and vice versa. In particular, this equivalence also holds between recursively enumerable disjunctions and recursively enumerable automata. We also show a way of carrying out the Weisfeiler-Leman algorithm with a formula of first order logic that has been augmented with Härtig's quantifier and greatest fixed points.
\end{abstract}

\section{Introduction}

In this paper we examine the relationship between the Weisfeiler-Leman algorithm and graded (multi)modal logic. The Weisfeiler-Leman algorithm \cite{weisfeiler1968reduction} is used in graph isomorphism testing. Informally, the algorithm starts with a $1$-coloring of a graph (depending on the source, the color of a node may also be its degree \cite{cai1992optimal} or just arbitrary \cite{barcelo}). The algorithm runs in discrete rounds; in each round, a node obtains a new color based on its previous color and how many neighbors of each color it has. The new color is determined injectively; if two nodes have different colors, or if they do not have the same number of neighbors of some color, then their new colors are also different. The algorithm stops when the coloring is no longer refined between rounds. 
Graded modal logic is a generalization of ordinary modal logic where instead of using typical modal diamonds $\Diamond \varphi$ we are allowed to use ``counting'' modalities $\Diamond_{\geq k} \varphi$, which basically state that ``$\varphi$ is true in at least $k$ neighbors''. This can be generalized to obtain graded multimodal diamonds $\langle \alpha \rangle_{\geq k}$.

The Weisfeiler-Leman algorithm bears some resemblance to type automata as presented in \cite{Kuusisto13}. 
Informally, a type automaton is a distributed system that refines a classification according to a node's own type (encoding of local information) and the set of types of its neighbors and ``accepts'' nodes of certain types. We present a modified type automaton that utilizes graded multimodal logic, where the classification is refined according to a node's type and the \emph{multiset} of types of its neighbors of each modality. This is essentially the Weisfeiler-Leman algorithm if it starts from an arbitrary coloring, takes different modalities into account and gives a binary classification of nodes on top of a coloring.

The type automaton is a specific type of message passing automaton which we call ``counting multichannel message passing automaton'' (or $\cmmpa$ for short) which is a generalization of the message passing automaton also presented in \cite{Kuusisto13}. We show that these automata have the same expressiveness as finite or countably infinite disjunctions of formulae of graded multimodal logic. In particular, we show that this equivalence holds between recursively enumerable automata and recursively enumerable disjunctions of formulae of graded multimodal logic. Informally, a recursively enumerable automaton is an automaton that can be ``simulated'' by a Turing machine, i.e., it can be ``finitized''. Also informally, a recursively enumerable disjunction of formulae of graded multimodal logic is a formula where all the disjuncts of the formula can be ``enumerated with a Turing machine'', i.e., the disjunction can be ``finitized''.
These are the main results of our paper. The connection to the Weisfeiler-Leman algorithm also provides a link between our work and graph neural networks.

\subsubsection*{Related work.} 

Research on the descriptive complexity of distributed systems began with \cite{weakpdc12},\cite{weak_models}. Hella et. al. provided logical characterizations of various weak classes of distributed algorithms. For instance, it was shown that graded modal logic ($\GML$) captures the class $\mathrm{MB}(1)$ of constant time distributed algorithms where nodes can distinguish between multisets of messages and no further, and where nodes broadcast the same message to each neighbor. This result in particular is closely related to the subject of this paper, as the Weisfeiler-Leman algorithm can distinguish precisely between multisets of colors and broadcasts in the same way. Another result of the paper was that graded multimodal logic ($\GMML$) captures the class $\mathrm{MV(1)}$ where nodes can send different messages to different neighbors instead of broadcasting the same message. This also relates to our paper as $\GMML$ is the logic under discussion in this paper.

The limitation to constant time algorithms was lifted in \cite{Kuusisto13}, which showed that finite message passing automata (FMPA) are characterized by modal substitution calculus ($\msc$). It was also shown in \cite{Kuusisto13} that the classes of pointed models recognizable by infinite message passing automata with no run-time limitations are the ones co-definable by a modal theory. We provide a generalization of this result by considering infinite message passing automata that can count the number of messages received and have multiple channels for passing messages, and graded multimodal logic with countably infinite disjunctions.
%We consider a class of message passing automata with multiple message channels and no run-time limitations, and extend modal logic to graded multimodal logic with countably infinite disjunctions.
In related work, \cite{Kuusisto13} also showed that the aforementioned logic $\msc$ contains the $\mu$-fragment of the modal $\mu$-calculus in the finite setting, which was further shown to capture finite asynchronous message passing automata in \cite{reiter}. The expressiveness of distributed automata and the decidability of the emptiness problem for them was studied in \cite{Kuusisto20}.
Later in \cite{dist_circ_mfcs}, it was shown that $\msc$ also captures distributed computing with Boolean circuits and identifiers, and in \cite{ahvonen2023neural} it was shown that the diamond-free fragment of $\msc$ called substitution calculus captures neural networks.

Regarding the connection between the Weisfeiler-Leman algorithm and graph neural networks, \cite{morris2019weisfeiler} and \cite{xu2018powerful} showed that the ability of the class of aggregate-combine GNNs to distinguish graph nodes is at most as effective as the Weisfeiler-Leman algorithm. Further, \cite{barcelo} showed that these aggregate-combine GNNs are exactly as effective at distinguishing graph nodes as graded modal logic. The graph neural networks in this class have a finite number of layers, and thus have limited running time. Later, in \cite{grohe2023descriptive} this class of GNNs was characterized with a guarded fragment of first-order logic with counting and built-in relations.

\section{Preliminaries}

In this section, we cover the necessary preliminary concepts of graded multimodal logic, graded multimodal types, and counting multichannel message passing automata. First, we will clarify some of the language and notation we will be using. We let $\emptyset$ denote the empty set, and we let $\varepsilon$ denote the empty tuple. When we say that a set or disjunction is countable, we mean that it is either finite or countably infinite. We also let $[n]$ denote the set $\{1, \dots, n\}$.
%For tuples we designate that $(k_{1}, \dots, k_{n}) \geq (\ell_{1}, \dots, \ell_{n})$ if and only if 
We let $\mathrm{PROP} = \{p_i \mid i \in \N\}$ denote the countable set of proposition symbols. We will sometimes use metavariables $p,q, q_1, q_2$, and so on to refer to proposition symbols in the set $\mathrm{PROP}$. We will often use the symbol $\Pi \subseteq \mathrm{PROP}$ to refer to a finite subset of proposition symbols.

\subsection{Graded multimodal logic}
In this section we introduce graded multimodal logic ($\GMML$). We define the syntax and semantics of $\GMML$, as well as the concepts of modal depth and logical equivalence in the usual way. We present an intuitive definition for what it means for an infinite disjunction of $\GMML$-formulae to define a class of pointed Kripke-models. We also introduce the concept of the width of a graded multimodal formula as a vector of vectors.

Let $\Pi$ be a finite set of proposition symbols, and let $I = [a]$ be a set of indices ($a \in \Z_{+}$). A \textbf{$(\Pi, I)$-formula of graded multimodal logic} is a formula $\varphi$ defined over the language
\[
\varphi \coloncolonequals \top \mid p_{i} \mid \neg \varphi \mid (\varphi \land \varphi) \mid \langle\alpha\rangle_{\geq k} \varphi
\]
where $p_{i} \in \Pi$, $\alpha \in I$ and $k \in \N$. Logical connectives $\lor$, $\rightarrow$ and $\leftrightarrow$ are defined in the usual way.
The modal depth of a $(\Pi, I)$-formula of $\GMML$ is the number of nested diamonds in the formula and the width of a $(\Pi, I)$-formula of $\GMML$ is a vector of vectors of the maximum numbers $k$ that appear in the diamonds $\langle\alpha\rangle_{\geq k}$ in the formula for each $\alpha \in I$ at each modal depth. More formally, the \textbf{modal depth} $\md(\varphi)$ of a formula $\varphi$ is defined as follows:
\begin{itemize}
    \item If $\varphi = \top$ or $\varphi = p_{i} \in \Pi$, then $\md(\varphi) = 0$.
    \item If $\varphi = \neg \psi$ for some $(\Pi, I)$-formula $\psi$ of $\GMML$, then $\md(\varphi) = \md(\psi)$.
    \item If $\varphi = (\psi \land \theta)$ for some $(\Pi, I)$-formulae $\psi$ and $\theta$ of $\GMML$, then we have that $\md(\varphi) = \max(\md(\psi), \md(\theta))$.
    \item If $\varphi = \langle\alpha\rangle_{\geq k} \psi$ for some $\alpha \in I$, $k \in \N$ and some $(\Pi, I)$-formula $\psi$ of $\GMML$, then $\md(\varphi) = \md(\psi) + 1$.
\end{itemize}
Similarly, we define the \textbf{width} $\width(\varphi)$ of a formula $\varphi$ recursively as follows:
\begin{itemize}
    \item If $\varphi = \top$ or $\varphi = p_{i} \in \Pi$, then $\width(\varphi) = \varepsilon$ where $\varepsilon$ is the empty tuple.
    \item If $\varphi = \neg \psi$ for some $(\Pi, I)$-formula $\psi$ of $\GMML$, then $\width(\varphi) = \width (\psi)$.
    \item If $\varphi = (\psi \land \theta)$ for some $(\Pi, I)$-formulae $\psi$ and $\theta$ of $\GMML$ such that
    \begin{itemize}
        \item $\width(\psi) = (\bk_{1}, \dots, \bk_{m})$ where $\bk_{i} = (k_{i, 1}, \dots, k_{i, \abs{I}}) \in \N^{\abs{I}}$ and
        \item $\width(\theta) = (\bl_{1}, \dots, \bl_{n})$ where $\bl_{i} = (\ell_{i, 1}, \dots, \ell_{i, \abs{I}}) \in \N^{\abs{I}}$, then
    \end{itemize}
    %$\width(\psi) = (\bk_{1}, \dots, \bk_{m})$ and $\width(\theta) = (\bl_{1}, \dots, \bl_{n})$, where $\bk_{i} = (k_{i, 1}, \dots, k_{i, \abs{I}}) \in \N^{\abs{I}}$ and $\bl_{i} = (\ell_{i, 1}, \dots, \ell_{i, \abs{I}}) \in \N^{\abs{I}}$, then
    \[
        \width(\varphi) = (\bo_{1}, \dots, \bo_{\max(m, n)}),
    \]
    where $\bo_{i} = (\max(k_{i, 1}, \ell_{i, 1}), \dots, \max(k_{i, \abs{I}}, \ell_{i, \abs{I}})) \in \N^{\abs{I}}$ and we agree that $k_{i, j} = 0$ for all $i > m$ and $\ell_{i, j} = 0$ for all $i > n$.
    \item If $\varphi = \langle\alpha\rangle_{\geq k} \psi$ for some $\alpha \in I$, $k \in \N$ and some $(\Pi, I)$-formula $\psi$ of $\GMML$ such that $\width(\psi) = (\bk_{1}, \dots, \bk_{m})$ (where $\bk_{1}, \dots, \bk_{m} \in \N^{\abs{I}}$), then
    \[
        \width(\varphi) = (\bk_{0}, \bk_{1}, \dots, \bk_{m})
    \]
    where $\bk_{0} = (\underbrace{0, \dots, 0}_{\alpha - 1}, k, \underbrace{0, \dots, 0}_{\abs{I} - \alpha})$.
\end{itemize}

A \textbf{Kripke-model of the vocabulary $(\Pi, I)$} (or $(\Pi, I)$-model) is a tuple of the form $M = (W, (R_{\alpha})_{\alpha \in I}, V)$, where $W \neq \emptyset$ is called the \textbf{domain}, $R_{\alpha} \subseteq W \times W$ is called an \textbf{accessibility relation for $\alpha$} and $V \colon \Pi \to \wp(W)$ is called the \textbf{valuation} of the model. A \textbf{pointed $(\Pi, I)$-model} is a pair $(M, w)$, where $M = (W, (R_{\alpha})_{\alpha \in I}, V)$ is a $(\Pi, I)$-model and $w \in W$. Furthermore, we let $\cN^{\alpha}(w)$ denote the set $\{\, v \in W \mid (w, v) \in R_{\alpha} \,\}$ of \textbf{$\alpha$-neighbors of $w$}. In this paper, we restrict ourselves to finite $(\Pi, I)$-models.

We define recursively what it means that a $(\Pi, I)$-formula $\varphi$ of $\GMML$ is \textbf{true} in a pointed $(\Pi, I)$-model $(M, w)$ (or that $(M, w)$ \textbf{satisfies} $\varphi$), denoted by $(M, w) \models \varphi$.
\begin{enumerate}
    \item $(M, w) \models \top$ always.
    \item $(M, w) \models p_{i}$ if and only if $w \in V(p_{i})$.
    \item $(M, w) \models \neg \psi$ if and only if $(M, w) \not\models \psi$.
    \item $(M, w) \models (\psi \land \theta)$ if and only if $(M, w) \models \psi$ and $(M, w) \models \theta$.
    \item $(M, w) \models \langle\alpha\rangle_{\geq k} \psi$ if and only if $\abs{\{\, v \in W \mid (w, v) \in R_{\alpha}, (M, v) \models \psi \,\}} \geq k$.
\end{enumerate}
Intuitively, the formula $\langle\alpha\rangle_{\geq k} \psi$ is true if and only if a node has at least $k$ $\alpha$-neighbors that satisfy $\psi$. For the sake of convenience, we define the abbreviation $\langle\alpha\rangle_{= k} \varphi$ that denotes the formula $\langle\alpha\rangle_{\geq k} \varphi \land \neg \langle\alpha\rangle_{\geq k + 1} \varphi$.
We say that two $(\Pi, I)$-formulae $\varphi$ and $\psi$ of $\GMML$ are \textbf{logically equivalent} ($\varphi \equiv \psi$) if $(M, w) \models \varphi \Leftrightarrow (M, w) \models \psi$ for all pointed $(\Pi, I)$-models $(M, w)$.
%We say that $\psi$ is a consequence of $\varphi$ and write $\varphi \models \psi$ if $(M, w) \models \varphi \Rightarrow (M, w) \models \psi$ for all pointed $\Pi$-models $(M, w)$.

Assume that $\Pi$ is a finite set of proposition symbols, $I = [a]$ is a set of indices $(a \in \Z_{+})$ and $S$ is a countable set of $(\Pi, I)$-formulae of $\GMML$. We define the infinite disjunction of formulae $\varphi \in S$ as the formula $\bigvee_{\varphi \in S} \varphi$, and we define for all pointed $(\Pi, I)$-models $(M,w)$ that $(M,w) \models \bigvee_{\varphi \in S} \varphi$ if and only if $(M,w) \models \varphi$ for some $\varphi \in S$. Logical equivalence is extended to include countable disjunctions in the obvious way. Now let $\cC$ denote the class of all finite pointed $(\Pi, I)$-models. We say that a class $\cK \subseteq \cC$ is \textbf{definable by a countable disjunction of $(\Pi, I)$-formulae of $\GMML$} if there exists a countable set $S$ of $(\Pi, I)$-formulae $\varphi$ of $\GMML$ such that for all $(M, w) \in \cC$:
\[
    (M, w) \in \cK \text{ if and only if } (M, w) \models \bigvee_{\varphi \in S} \varphi.
\]
Finally, we say that a countable disjunction $\bigvee_{\varphi \in S} \varphi$ of $(\Pi, I)$-formulae of $\GMML$ is \textbf{recursively enumerable}
%($\bigvee_{\varphi \in S} \varphi \in \RE$)
if the set $S$ is recursively enumerable.\footnote{Note that more precisely ``recursively enumerable disjunction'' is an abbreviation. More formally, it means that there exists a Turing machine which can enumerate all the disjuncts of such a formula.}

\subsubsection*{Closure properties}

Of course countable disjunctions of $\GMML$-formulae are closed under disjunction. To make our logic more practical it is easy to see that countable disjunctions of $\GMML$-formulae are also closed under conjunction (defined in the obvious way). 

\begin{proposition}
    Let $\Pi$ be a finite set of proposition symbols and $I = [a]$ a set of indices (where $a \in \Z_+$). Given two countable (possibly infinite) disjunctions of $(\Pi, I)$-formulae of $\GMML$ $\bigvee_{\varphi \in S} \varphi$ and $\bigvee_{\psi \in S'} \psi$, we have that $\bigvee_{\varphi \in S} \varphi \land \bigvee_{\psi \in S'} \psi$ is also logically equivalent to a countable disjunction of $(\Pi, I)$-formulae of $\GMML$.
\end{proposition}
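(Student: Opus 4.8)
The plan is to prove this by distributing the conjunction over the two disjunctions. Concretely, I would set $T := \{\, (\varphi \land \psi) \mid \varphi \in S,\ \psi \in S' \,\}$ and claim that $\bigvee_{\chi \in T} \chi$ is the required countable disjunction, i.e. that
\[
\bigvee_{\varphi \in S} \varphi \ \land\ \bigvee_{\psi \in S'} \psi \ \equiv\ \bigvee_{\chi \in T} \chi .
\]

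First I would dispatch the syntactic bookkeeping. Since $\land$ is a connective of $\GMML$, each $(\varphi \land \psi)$ with $\varphi,\psi$ being $(\Pi,I)$-formulae of $\GMML$ is itself a $(\Pi,I)$-formula of $\GMML$, so $T$ is a set of $(\Pi,I)$-formulae of $\GMML$. Moreover $T$ is the image of $S \times S'$ under the map $(\varphi,\psi) \mapsto (\varphi \land \psi)$, and a product of two countable sets is countable, hence $T$ is countable. (If $S$ or $S'$ is empty then $T$ is empty and both sides of the displayed equivalence are unsatisfiable, so that degenerate case is immediate; if one prefers a nonempty witness set one may instead take $\{\neg \top\}$ there.)

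The only remaining point is the semantic equivalence, which is a direct unwinding of the definition of satisfaction for countable disjunctions. For an arbitrary finite pointed $(\Pi,I)$-model $(M,w)$ we have $(M,w) \models \bigvee_{\varphi \in S} \varphi \land \bigvee_{\psi \in S'} \psi$ iff $(M,w) \models \bigvee_{\varphi \in S} \varphi$ and $(M,w) \models \bigvee_{\psi \in S'} \psi$, iff there exist $\varphi \in S$ and $\psi \in S'$ with $(M,w) \models \varphi$ and $(M,w) \models \psi$, iff there exists a pair $(\varphi,\psi) \in S \times S'$ with $(M,w) \models (\varphi \land \psi)$, iff $(M,w) \models \bigvee_{\chi \in T} \chi$. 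This establishes the claim.

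There is essentially no hard step here; the proposition is a routine closure property. The one place to be slightly careful is to invoke the model-level semantics of $\land$ (clause 4 of the truth definition) together with the semantics of countable disjunction exactly as defined, rather than any informal distributive law, and — if one wishes — to record the behaviour when $S$ or $S'$ is empty. If in addition one wanted the stronger statement that the construction preserves recursive enumerability, it would suffice to note that a Turing machine enumerating $S$ and one enumerating $S'$ can be combined, by dovetailing over pairs $(\varphi,\psi)$, into a machine enumerating $T$, so the conjunction of two recursively enumerable disjunctions is again recursively enumerable.
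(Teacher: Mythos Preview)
Your proof is correct and follows essentially the same approach as the paper: both distribute the conjunction over the two disjunctions to obtain $\bigvee_{\varphi \in S,\, \psi \in S'} (\varphi \land \psi)$. Your version is in fact more careful than the paper's, which simply invokes ``De Morgan's laws'' (really distributivity) without spelling out the countability of $S \times S'$ or the semantic verification.
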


\begin{proof}
    Applying De Morgan's laws it is easy to see that
    \[
    \bigvee_{\varphi \in S} \varphi \land \bigvee_{\psi \in S'} \psi \equiv \bigvee_{\psi \in S'} \Big( \bigvee_{\varphi \in S} (\varphi \land  \psi) \Big) \equiv \bigvee_{\psi \in S', \varphi \in S} (\varphi \land \psi).   
    \]
\end{proof}
    
It is also easy to see that countable disjunctions are not closed under negation.
Consider the case where $\Pi = \{p\}$ and $I = \{1\}$, and consider the countably infinite set $S = \{p, \langle 1 \rangle_{\geq 1} p, \langle 1 \rangle_{\geq 1} \langle 1 \rangle_{\geq 1} p, \dots\}$ of $(\Pi, I)$-formulae of $\GMML$ as well as its infinite disjunction $\bigvee_{\varphi \in S} \varphi$. This disjunction states that the proposition $p$ is reachable with the relation $R_{1}$. The negation of the infinite disjunction is $\neg \bigvee_{\varphi \in S} \varphi \equiv \bigwedge_{\varphi \in S} \neg \varphi$, which is a countable conjunction of $(\Pi, I)$-formulae. Furthermore, no countable disjunction of $(\Pi, I)$-formulae of $\GMML$ is equivalent to this conjunction (i.e., no countable disjunction can state that $p$ is not reachable).

\subsection{Graded multimodal types}

In this section we define the concept of a graded multimodal type and a \emph{full} graded multimodal type. The intuition is that a graded multimodal type is a formula of $\GMML$ that specifies a pointed model up to a certain width; no $\GMML$-formula bounded by that width can distinguish between two pointed models that satisfy the type in question. A full graded multimodal type is a graded multimodal type that specifies a pointed model up to a certain modal depth; no $\GMML$-formula bounded by that depth can distinguish between two models that satisfy the full type in question.

Let $\Pi$ be a finite set of proposition symbols, let $I = [a]$ be a set of indices ($a \in \Z_{+}$), and let $(M, w)$ be a pointed $(\Pi, I)$-model. The \textbf{graded multimodal $(\Pi, I)$-type of width $\bk$} ($\bk = (\bk_{1}, \dots, \bk_{n}) \in (\N^{\abs{I}})^{n}$ for some $n \in \N$) of $(M, w)$ (denoted by $\tau^{(M,w)}_{\bk}$) is defined recursively as follows. For $\bk = \varepsilon$, we define that
\[
    \tau^{(M,w)}_{\varepsilon} \colonequals \bigwedge\limits_{w \in V(p_{i})} p_{i} \land \bigwedge\limits_{w \notin V(p_{i})} \neg p_{i}.
\]
We assume canonical bracketing and ordering of conjuncts to ensure that each pointed model has exactly one type of each width. In the special case where $\Pi = \emptyset$, we define instead that $\tau^{(M,w)}_{\varepsilon} \colonequals \top$.
Assume we have defined the graded multimodal $(\Pi, I)$-type of width $\bk \colonequals (\bk_{1}, \dots, \bk_{n}) \in (\N^{\abs{I}})^{n}$ of all pointed models. Let $T_{\bk}$ denote the set of all graded multimodal $(\Pi, I)$-types of width $\bk$, and let $\bk_{0} = (k_{1}, \dots, k_{\abs{I}}) \in \N^{\abs{I}}$. We define the graded multimodal $(\Pi, I)$-type of width $(\bk_{0}, \bk_{1}, \dots, \bk_{n})$ of $(M,w)$ as follows. %If $k \leq \deg_{+}(w)$, then
\[
    \begin{aligned}
        &\tau^{(M,w)}_{(\bk_{0}, \bk_{1}, \dots, \bk_{n})} \colonequals \tau^{(M,w)}_{\varepsilon} &&\land \bigwedge_{\ell = 1}^{k_{\alpha}-1} \{\, \langle\alpha\rangle_{= \ell} \tau \mid \alpha \in I, \tau \in T_{\mathbf{\bk}}, (M,w) \models \langle\alpha\rangle_{= \ell} \tau \,\} \\
        %&&&\land \bigwedge_{\ell = 0}^{k_{i}-1} \{\, \neg \langle\alpha_{i}\rangle_{\geq \ell+1} \tau \mid \tau \in T_{\bk}, (M,w) \models \langle\alpha_{i}\rangle_{= \ell} \tau, i \in [\abs{I}] \,\} \\
        &&&\land \bigwedge \{\, \langle\alpha\rangle_{\geq k_{\alpha}} \tau \mid \alpha \in I, \tau \in T_{\bk}, (M,w) \models \langle\alpha\rangle_{\geq k_{\alpha}} \tau \,\} \\
        &&&\land \bigwedge \{\, \langle\alpha\rangle_{= \abs{\cN^{\alpha}(w)}} \top \mid \alpha \in I, k_{\alpha} > \abs{\cN^{\alpha}(w)} \,\}.
    \end{aligned}
\]
To clarify, the formula tells how many $\alpha$-neighbors (up to $k_{\alpha}$) of $w$ satisfy each type, and if the number of $\alpha$-neighbors is less than $k_{\alpha}$, then the formula tells the precise number of $\alpha$-neighbors as well. Notice that in this case the formula will not change if we increase the value of $k_{\alpha}$ past $\abs{\cN^{\alpha}(w)} + 1$. By doing this for each index $\alpha$ at every step of the recursion, we arrive at the full type defined below.

Similarly, the \textbf{full graded multimodal $(\Pi, I)$-type of modal depth $n$} of $(M,w)$ (denoted by $\tau^{(M,w)}_{n}$) is defined recursively over $n$. For $n = 0$, we have $\tau^{(M,w)}_{0} \colonequals \tau^{(M,w)}_{\varepsilon}$. Assume we have defined the full graded multimodal $(\Pi, I)$-type of modal depth $n$ of all pointed $(\Pi, I)$-models. The full graded multimodal $(\Pi, I)$-type of modal depth $n+1$ of $(M,w)$ is any type $\tau^{(M,w)}_{(\bk_{1}, \dots, \bk_{n+1})}$ where $\bk_{i} = (k_{i, 1}, \dots, k_{i, \abs{I}})$ and
\[
    k_{i, \alpha} > \max\{\, \abs{\cN^{\alpha}(v)} \mid (w,v) \in (\bigcup_{\alpha \in I} R_{\alpha})^{i-1} \,\}
\]
for all $i \in [n+1]$ and $\alpha \in I$.

Note that the class of full graded multimodal types is not a set of types in the ordinary sense. Given an arbitrary $(\Pi, I)$-formula of $\GMML$, we can not construct a logically equivalent finite disjunction of full graded multimodal $(\Pi, I)$-types even if we restrict them by modal depth. (This is because modal depth only restricts the length of the vector indicating the width of the formula; it does not restrict the size of the elements of the vector.) However, as we will see in Lemma \ref{Formula_to_types_GMML}, we can construct a logically equivalent recursively enumerable disjunction of full types instead.

\subsection{Message passing automata}

In this section we define the concepts of \emph{counting multichannel message passing automata} and \emph{counting multichannel type automata}. These are similar to the message passing automata and type automata defined in \cite{Kuusisto13}. The main difference is that counting multichannel automata receive a vector of \emph{multisets} of messages instead of a set of messages, the state of a node can depend on the type of edge along which each message was sent, and the counting multichannel type automata use graded multimodal types as states instead of ordinary modal types.

Given a set $X$, a \textbf{multiset} of $X$ is a function $M \colon X \to \N$. Intuitively, for each $x \in X$, $M(x)$ tells how many times $x$ appears in the multiset $M$. A multiset can be written by listing each instance of an element explicitly, and using double brackets to distinguish it from a set. For example, the sets $\{1, 1, 2\}$ and $\{1, 2\}$ are the same set, but the multisets $\{\{1, 1, 2\}\}$ and $\{\{1, 2\}\}$ are different, because $1$ appears in them a different number of times.

Given a multiset $M$, by $x \in^{k} M$ we mean that $x$ appears \emph{at least} $k$ times in multiset $M$. 
%For a multiset $N$, we let $\sigma \in^{k} N$ denote that $\sigma$ appears in $N$ at least $k$ times.
For example, if $M = \{\{1, 1, 2\}\}$, then $1 \in^{1} M$, $1 \in^{2} M$ and $2 \in^{1} M$, but $0 \notin^{1} M$, $1 \notin^{3} M$ and $2 \notin^{2} M$. Trivially, $x \in^{0} M$ for all $x$.

Given a set $X$, we let $\fm(X) = \{\, M \mid \text{$M$ is a multiset of $X$}\,\}$, i.e., $\fm(X)$ denotes the set of all multisets of $X$. For example, $\fm(\{1\}) = \{\emptyset, \{\{1\}\}, \{\{1, 1\}\}, \{\{1, 1, 1\}\}, \dots\}$.
%Likewise, let $\fm_{k}(S)$ denote the set of subsets of $S$ with multiplicities up to $k$. For example, $\fm_{2}(\{1\}) = \{\emptyset, \{\{1\}\}, \{\{1, 1\}\}\}$.

Next we define message passing automata where the nodes of a network are capable of counting how many of each state they receive as a message along each type of edge, which we call counting multichannel message passing automata. We also specify a special message passing automaton that makes use of the full graded multimodal $(\Pi, I)$-types of modal depth $n$ from the previous section.
Let $\Pi$ be a finite set of proposition symbols and let $I = [a]$ be a set of indices ($a \in \Z_{+}$). We define the \textbf{counting multichannel message passing automaton for $(\Pi, I)$} (or $\cmmpa$) as a tuple $A = (Q, \pi, \delta, F)$, where
\begin{itemize}
    \item $Q \neq \emptyset$ is a countable (possibly infinite) set of \textbf{states},
    \item $\pi \colon \wp(\Pi) \to Q$ is an \textbf{initializing function},
    \item $\delta \colon (\fm(Q))^{\abs{I}} \times Q \to Q$ is a \textbf{transition function}, and
    \item $F \subseteq Q$ is a set of \textbf{accepting states}.
    %\item \textcolor{magenta}{$F \colon Q \to \{0, 1\}$ is an \textbf{acceptance function}.}
\end{itemize}
The function $\pi$ determines the initial state of a node depending on the set of proposition symbols true in that node. The function $\delta$ assigns the next state of a node based on the \emph{multiset} of states received as messages from neighbors along each edge type as well as the previous state of the node.
The set $F$ is the set of states whereby each pointed $(\Pi, I)$-model that visits a state in $F$ is accepted by the automaton.
%\textcolor{magenta}{The function $F$ assigns either $0$ or $1$ to each state. States that are given the assignment $1$ are called \textbf{accepting states}, and a pointed $(\Pi, I)$-model is accepted by the automaton if and only if it visits an accepting state.}
We say that the automaton $A$ is \textbf{recursively enumerable} if the set $F$ is recursively enumerable and the function $\delta$ is computable. \footnote{Note that the automaton could be further finitized by replacing the set of accepting states with an accepting function $F \colon Q \to \{0, 1\}$.}
%(Note that the class of recursively enumerable $\cmmpa$ is syntactically decidable.)

Formally, the \textbf{run} of a $\cmmpa$ $A = (Q, \pi, \delta, F)$ (for $(\Pi, I)$) is defined over a $(\Pi, I)$-model $M = (W, (R_{\alpha})_{\alpha \in I}, V)$ as a sequence of global configurations $f_{t} \colon W \to Q$. The \textbf{global configuration} of $A$ in $(M, w)$ at time $t \in \N$ is defined recursively as follows. For $t = 0$, we define that if $\Pi' \subseteq \Pi$ is the set of proposition symbols true in $(M, w)$, then $f_{0}(w) = \pi(\Pi')$. Now assume we have defined $f_{n}$. First, we define the multiset $N_{\alpha} = \{\{\, f_{n}(v) \mid (w, v) \in R_{\alpha} \,\}\}$ (i.e., $N_{\alpha}$ is the multiset of states received as messages by $w$ in round $n + 1$ across edges labeled $\alpha$; note that messages flow in the direction opposite to the relation $R_{\alpha}$). Then we define that $f_{n + 1}(w) = \delta(N_{1}, \dots, N_{\abs{I}}, f_{n}(w))$. We say that $A$ \textbf{accepts} a pointed $(\Pi, I)$-model $(M, w)$ (in round $t$) if it visits an accepting state in some round $t \in \N$, i.e., $f_{t}(w) \in F$.
%\textcolor{magenta}{$F(f_{t}(w)) = 1$.}

Let $\Pi$ be a finite set of proposition symbols, let $I = [a]$ be a set of indices ($a \in \Z_{+}$), and let $\cC$ denote the class of all finite pointed $(\Pi, I)$-models. We say that a class $\cK \subseteq \cC$ is \textbf{recognizable by a counting multichannel message passing automaton} if there exists a $\cmmpa$ $A$ such that $A$ accepts a pointed model $(M,w) \in \cC$ if and only if $(M, w) \in \cK$.

We recall that in \cite{Kuusisto13} it was shown that a class of finite message passing automata (where each automaton receives a set of messages instead of a vector of multisets in every round and the set of states is finite) can be characterized by so called modal substitution calculus (or $\msc$). This characterization can be extended directly to our framework, i.e., \emph{finite} counting multichannel message passing automata are characterized by $\msc$ extended with graded multimodalities. Also other accepting conditions relating to automaton theory are possible to extend for these automata. For example, finite message passing automata with Büchi or Muller accepting conditions can be characterized by $\msc$ with the corresponding ``accepting conditions'', i.e., by defining suitable semantics for $\msc$ (and again these can be directly extended to our framework).

\subsection{The Weisfeiler-Leman algorithm and type automata}

Intuitively, a counting multichannel message passing automaton is a generalization of the \textbf{Weisfeiler-Leman algorithm} (or WL-algorithm). The WL-algorithm starts from a $1$-coloring of a graph. In each round, a node examines its current color and how many neighbors of each color it has, and changes to a corresponding color such that no two nodes obtain the same color if they have either a different color or a different number of neighbors of some color. Effectively, in each round $n \in \N$ the WL-algorithm partitions the class of all finite pointed models into isomorphism classes, which is to say that pointed models with the same color are isomorphic (up to depth $n$). The algorithm stops once the partitioning no longer changes between rounds, at which point two nodes sharing the same color are fully isomorphic. The coloring produced by the Weisfeiler-Leman algorithm is called the \textbf{stable coloring}; note that it is achieved in round $n$ at the latest, where $n$ is the size of the input graph/model. We will next define a \emph{counting multichannel type automaton} that intuitively carries out the WL-algorithm starting from an arbitrary coloring and across multiple edge types or modalities. The empty vocabulary $\Pi = \emptyset$ corresponds to starting from the classical $1$-coloring, and the index set $I = [1]$ (where relations $R_{1}$ are furthermore irreflexive and symmetric) corresponds to the classical simple graph where edges are not labelled.

A \textbf{counting multichannel type automaton} $A$ for $(\Pi, I)$ is a $\cmmpa$ defined as follows. Its set of states is exactly the set $\cT$ of all full graded multimodal $(\Pi, I)$-types. The initializing function $\pi \colon \wp(\Pi) \to \cT$ is defined such that if $\Pi' \subseteq \Pi$ is the set of proposition symbols true at $(M, w)$, then $\pi(\Pi') = \tau^{(M,w)}_{0}$. Let $N_{1}, \dots, N_{\abs{I}}$ be multisets of full graded multimodal $(\Pi, I)$-types of some modal depth $n$ and let $\tau$ be one such type. We define the state transition function $\delta \colon (\fm(\cT))^{\abs{I}} \times \cT \to \cT$,
\[
    \begin{aligned}
        &\delta(N_{1}, \dots, N_{\abs{I}}, \tau) = \tau_{0} &&\land \bigwedge_{k = 1}^{\abs{N_{\alpha}}} \{\, \langle\alpha\rangle_{= k} \sigma \mid \alpha \in I, \sigma \in^{k} N_{\alpha}, \sigma \notin^{k + 1} N_{\alpha} \,\} \\
        &&&\land \bigwedge \{\, \langle\alpha\rangle_{= \abs{N_{\alpha}}} \top \mid \alpha \in I \,\},
    \end{aligned}
\]
where $\tau_{0}$ is the type of modal depth $0$ that appears as a subformula of $\tau$, but not in the scope of any diamond.
Other inputs where $N_{1}, \dots, N_{\abs{I}}$ and $\{\tau\}$ contain types whose modal depths do not match are not used and we may define them as we please. The set of accepting states can be defined to be any subset of $\cT$.
%\textcolor{magenta}{function $F \colon \cT \to \{0, 1\}$}
It can be seen that the state of $A$ at $(M, w)$ in round $n$ is $\tau$ if and only if $\tau$ is the full graded multimodal $(\Pi, I)$-type $\tau^{(M,w)}_{n}$ of modal depth $n$ of $(M,w)$.

\begin{remark}\label{computability_remark}
    The transition function $\delta$ is computable in polynomial time: A Turing-machine scans the multisets $N_{1}, \dots, N_{\abs{I}}$ and for any $k \leq \abs{N_{\alpha}}$ constructs the formula $\langle\alpha\rangle_{= k} \sigma$ for every type $\sigma$ such that $\sigma \in^{k} N_{\alpha}$ and $\sigma \notin^{k+1} N_{\alpha}$. It also scans the previous type $\tau$ and constructs the formula $\tau_{0}$ by simply copying the propositional information from $\tau$. Finally, it observes the size of each multiset $N_{\alpha}$ and constructs the formula $\langle\alpha\rangle_{= \abs{N_{\alpha}}} \top$. This is clearly computable in polynomial time given that the whole input only needs to be scanned once and the operations performed on it are simple.
    %\textcolor{magenta}{This means that in order to show that a counting multichannel type automaton is recursively enumerable it suffices to show that its set of accepting states is recursively enumerable.}
\end{remark}

\section{GMML-formulae capture CMMPA-recognizable classes}

In this section, we establish that the classes of pointed $(\Pi, I)$-models definable by a countable (possibly infinite) disjunction of $(\Pi, I)$-formulae of $\GMML$ are exactly the ones recognizable by counting multichannel message passing automata for $(\Pi, I)$. Furthermore, the classes defined by recursively enumerable disjunctions are exactly those recognized by recursively enumerable automata. We start by showing the claim from left to right, from formulae to automata. First we present a useful lemma, according to which all formulae of $\GMML$ have a logically equivalent recursively enumerable disjunction of full graded multimodal types of the same modal depth. Across this whole section we assume that $\Pi$ is a finite set of proposition symbols and $I = [a]$ is a set of indices where $a \in \Z_{+}$.

\begin{lemma}\label{Formula_to_types_GMML}
    %Let $\Pi$ be a finite set of proposition symbols and let $I = [a]$ be a set of indices ($a \in \Z_{+}$).
    Each $(\Pi, I)$-formula $\varphi$ of modal depth $n$ of $\GMML$ has a logically equivalent recursively enumerable disjunction of full graded multimodal $(\Pi, I)$-types of modal depth $n$.
\end{lemma}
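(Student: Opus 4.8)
The plan is to induct on the modal depth $n$ of $\varphi$. For the base case $n = 0$, $\varphi$ is a Boolean combination of proposition symbols from $\Pi$, so its models are determined by which subsets $\Pi' \subseteq \Pi$ satisfy $\varphi$; since $\Pi$ is finite there are finitely many full types of modal depth $0$ (namely the $\tau_\varepsilon$'s, one per subset of $\Pi$), and $\varphi$ is logically equivalent to the finite — hence trivially recursively enumerable — disjunction of those $\tau_\varepsilon$ consistent with $\varphi$. First I would check that this disjunction is genuinely equivalent to $\varphi$, which is immediate because $\tau^{(M,w)}_\varepsilon$ pins down exactly the propositional information at $w$.

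For the inductive step, suppose the claim holds for all $\GMML$-formulae of modal depth $< n+1$ and let $\varphi$ have modal depth $n+1$ and width $\bk = (\bk_1, \dots, \bk_{n+1})$. The key observation is that $\varphi$ is a Boolean combination of propositional atoms and formulae of the shape $\langle \alpha \rangle_{\geq j} \psi$ where each $\psi$ has modal depth $\leq n$ and $j \leq k_{1,\alpha}$ (the first coordinate of the width at that diamond). By the induction hypothesis each such $\psi$ is equivalent to a recursively enumerable disjunction $\bigvee_{\sigma \in S_\psi} \sigma$ of full types of modal depth $\leq n$; one first pads all these types to a common modal depth $n$ (a full type of depth $m < n$ is equivalent to a recursively enumerable disjunction of full types of depth $n$, essentially by the same argument, refining by the types of neighbors). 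The crucial combinatorial fact is that for a fixed finite set $T$ of pairwise inconsistent full types of depth $n$ that are jointly exhaustive, the truth of any $\langle\alpha\rangle_{\geq j}\psi$ at $(M,w)$ is determined by, for each $\tau' \in T$, how many $\alpha$-neighbors of $w$ satisfy $\tau'$ — and this is exactly the information recorded in a full type of depth $n+1$. Concretely, I would argue that $\langle\alpha\rangle_{\geq j}\psi \equiv \bigvee$ (over all ways of distributing $\geq j$ $\alpha$-neighbors among the finitely many depth-$n$ full types consistent with $\psi$) of the corresponding conjunctions $\bigwedge_\alpha \bigwedge_{\tau'} \langle\alpha\rangle_{= c_{\alpha,\tau'}} \tau'$ (or $\geq$ for the truncated coordinate), i.e.\ of (pieces of) full types of depth $n+1$. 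Distributing the Boolean operations over these disjunctions using the closure of countable disjunctions under conjunction (the Proposition already proved) and the obvious closure under disjunction, and discarding inconsistent conjunctions, yields that $\varphi$ itself is equivalent to a countable disjunction of full types of modal depth $n+1$.

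Finally I would verify recursive enumerability: the types of depth $n$ in each $S_\psi$ are enumerable by hypothesis, the padding-to-common-depth operation is effective, there are only finitely many diamonds in $\varphi$ and the width bounds $k_{i,\alpha}$ are explicitly given, so the finitely many "distribution patterns" relevant at the top level can be generated mechanically, as can the consistency checks needed to prune the disjunction; composing these gives a Turing machine enumerating the disjuncts. The main obstacle I anticipate is bookkeeping around widths and modal depths: a full type of depth $n+1$ does not have a single width (the coordinates $k_{i,\alpha}$ must exceed the maximum neighborhood sizes along length-$(i-1)$ paths, which varies from model to model), so the disjunction must range over all admissible width vectors, and one has to be careful that (i) every model's depth-$(n+1)$ full type appears as a disjunct when $\varphi$ holds there, and (ii) no disjunct is satisfied by a model where $\varphi$ fails — this is where the "canonical" choices in the type definition and the fact that $\langle\alpha\rangle_{=k}\tau$-type information stabilizes once $k$ exceeds $|\cN^\alpha(w)|$ do the real work. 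Bridging the gap between "width-$\bk$ types for varying $\bk$" and "genuine full types" is the part I would write out most carefully.
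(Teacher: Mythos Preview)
Your inductive, syntactic approach is workable in principle but takes a genuinely different route from the paper, and it is considerably more involved than necessary. The paper's argument is purely semantic and non-inductive: it simply sets $\Phi = \{\tau \in T_n \mid \tau \models \varphi\}$ and observes that the full types of depth $n$ partition the class of finite pointed models in such a way that any two pointed models sharing the same full depth-$n$ type satisfy exactly the same $\GMML$-formulae of modal depth $\leq n$; hence each $\tau \in T_n$ either entails $\varphi$ or entails $\neg\varphi$, and $\varphi \equiv \bigvee\Phi$ follows immediately. Recursive enumerability is then handled by a direct enumeration: list full types by increasing width bound, and for each type enumerate the (finitely many, since a full type bounds all relevant degrees) tree models of depth $n$ satisfying it and model-check $\varphi$ on each. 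No induction, no distributing diamonds over infinite disjunctions, no bookkeeping of ``pieces of types.''

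What your approach buys is an explicit syntactic construction of the disjuncts from the structure of $\varphi$, which could be useful if one cared about the shape of the resulting types; what the paper's approach buys is a two-line equivalence proof that cleanly separates the logical content (full types decide all formulae of the same depth) from the computability content (enumerate-and-check). Two rough spots in your plan that you should tighten if you pursue it: first, the set $T$ of full depth-$n$ types is \emph{not} finite (you call it ``a fixed finite set,'' though your final paragraph shows you know better), so the ``distribution patterns'' at the top level range over finite multisets drawn from an infinite set, and you must argue this is still recursively enumerable; second, after distributing the Boolean connectives you obtain conjunctions of \emph{fragments} of depth-$(n{+}1)$ types, not full types themselves, and you need an explicit completion step (replacing each consistent fragment by the enumerable disjunction of full types extending it) before you are done.
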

\begin{proof}
    Let $\varphi$ be a $(\Pi, I)$-formula of $\GMML$ where $\md(\varphi) = n$, and let $T_{n}$ be the set of all full graded multimodal $(\Pi, I)$-types of modal depth $n$. Let $\Phi = \{\, \tau \in T_{n} \mid \tau \models \varphi \,\}$ and $\neg \Phi= \{\, \tau \in T_{n} \mid \tau \models \neg\varphi \,\}$, and let $\bigvee \Phi$ denote the disjunction of the types in $\Phi$.
    Obviously we have that $\Phi \cap \neg \Phi = \emptyset$ and $\bigvee \Phi \models \varphi$. To show that $\varphi \models \bigvee \Phi$, it suffices to show that $\Phi \cup \neg \Phi = T_{n}$.
    Assume instead that $\tau \in T_{n} \setminus (\Phi \cup \neg \Phi)$. Then there exist pointed $(\Pi, I)$-models $(M, w)$ and $(N, v)$ that satisfy $\tau$ such that $(M, w) \models \varphi$ and $(N, v) \models \neg \varphi$. Since $(M, w)$ and $(N, v)$ satisfy the same full graded multimodal $(\Pi, I)$-type of modal depth $n$, there can be no $(\Pi, I)$-formula of $\GMML$ of modal depth at most $n$ that distinguishes $(M, w)$ and $(N, v)$, but $\varphi$ is such a formula. Ergo, $\bigvee \Phi$ and $\varphi$ are logically equivalent.
    %The set $\Phi$ is countable, since there are a finite number of graded multimodal types of each width and we can order the widths lexicographically.

    To see that $\bigvee \Phi$ is recursively enumerable, we can go through the types in the set $T_{n}$ in the order of increasing width. We start by considering types where no modality at any depth has width more than $1$, then $2$, and so forth. The number of such types is finite in every step. For each such type, we go through all pointed models $(M, w)$ where the type is true (where the shortest directed path from $w$ to any node $v$ has length at most $n$, since neither $\varphi$ nor $\bigvee \Phi$ can see past an $n$-hop neighborhood). Since the types are full types, they specify the maximum degree of each modality at each depth, and the number of such models is thus also finite. For each such model, we check whether the formula $\varphi$ is true; if it is true in all the models where a type is true, then that type is included in the disjunction. Thus, the disjunction $\bigvee \Phi$ is recursively enumerable.
\end{proof}

With this lemma, we are ready to prove the implication from left to right.

\begin{theorem}\label{thrm: gmml to cmmpa}
    %Let $\Pi$ be a finite set of proposition symbols and let $I = [a]$ be a set of indices ($a \in \Z_{+}$).
    Each class of finite pointed $(\Pi, I)$-models definable by a countable disjunction of $(\Pi, I)$-formulae of $\GMML$ is recognizable by a counting multichannel type automaton for $(\Pi, I)$. Moreover, if the disjunction is recursively enumerable, then so is the automaton.
\end{theorem}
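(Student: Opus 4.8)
The plan is to take a countable disjunction $\bigvee_{\varphi\in S}\varphi$ of $(\Pi,I)$-formulae of $\GMML$ defining a class $\cK$, and to recognize $\cK$ with a counting multichannel type automaton. The natural candidate automaton is the type automaton $A$ described in the preliminaries, whose states are all full graded multimodal $(\Pi,I)$-types; we already know that the state of $A$ at $(M,w)$ in round $n$ is exactly the full type $\tau^{(M,w)}_n$. So the whole problem reduces to choosing the right set $F$ of accepting states: we want $(M,w)\models\bigvee_{\varphi\in S}\varphi$ if and only if $\tau^{(M,w)}_n\in F$ for some $n$. The key tool is Lemma \ref{Formula_to_types_GMML}: each $\varphi\in S$ of modal depth $n_\varphi$ is logically equivalent to a (recursively enumerable) disjunction $\bigvee\Phi_\varphi$ of full types of modal depth $n_\varphi$. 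I would then simply set $F=\bigcup_{\varphi\in S}\Phi_\varphi$.

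The correctness argument has two directions. If $(M,w)\in\cK$, then $(M,w)\models\varphi$ for some $\varphi\in S$, hence $(M,w)\models\bigvee\Phi_\varphi$, so $(M,w)$ satisfies some $\tau\in\Phi_\varphi$ of modal depth $n_\varphi$. But the full type of $(M,w)$ of a given modal depth is unique, so $\tau^{(M,w)}_{n_\varphi}=\tau\in F$, and $A$ accepts in round $n_\varphi$. Conversely, if $A$ accepts $(M,w)$ in some round $n$, then $\tau^{(M,w)}_n\in F$, so $\tau^{(M,w)}_n\in\Phi_\varphi$ for some $\varphi\in S$; since every pointed model satisfying a type in $\Phi_\varphi$ satisfies $\varphi$, we get $(M,w)\models\varphi$, hence $(M,w)\in\cK$. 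Here I would be a little careful that $\tau^{(M,w)}_n\in\Phi_\varphi$ forces $n=n_\varphi$ (types of distinct modal depths are distinct as formulae), so that the equivalence from the lemma applies; this is where the uniqueness of full types by modal depth does the work.

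For the recursive-enumerability addendum, suppose $S$ is recursively enumerable, i.e. a Turing machine enumerates $\varphi_1,\varphi_2,\dots$. For each $\varphi_i$, Lemma \ref{Formula_to_types_GMML} gives that $\Phi_{\varphi_i}$ is recursively enumerable, uniformly in $i$ (the proof of the lemma is effective: compute $n_\varphi=\md(\varphi)$, enumerate full types of modal depth $n_\varphi$ by increasing width, and for each one check $\varphi$ on the finitely many relevant pointed models). Dovetailing the enumerations of $\Phi_{\varphi_1},\Phi_{\varphi_2},\dots$ yields an enumeration of $F=\bigcup_i\Phi_{\varphi_i}$, so $F$ is recursively enumerable. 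The transition function $\delta$ of the type automaton is computable (indeed in polynomial time, by Remark \ref{computability_remark}), so $A$ is a recursively enumerable automaton.

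I expect the main obstacle to be purely bookkeeping rather than conceptual: making sure the modal-depth indexing is handled consistently, so that ``$A$ accepts in round $n$'' matches up with ``$(M,w)$ realizes a full type of modal depth $n$ that certifies some disjunct $\varphi$ of modal depth exactly $n$.'' Once the identification of $A$'s round-$n$ state with $\tau^{(M,w)}_n$ and the uniqueness of full types by modal depth are in hand, everything else is a direct application of Lemma \ref{Formula_to_types_GMML} together with a dovetailing argument for the effectivity claim.
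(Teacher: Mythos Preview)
Your proposal is correct and follows essentially the same approach as the paper: use Lemma~\ref{Formula_to_types_GMML} to replace each disjunct by a disjunction of full types of the same modal depth, take the union of these type sets as the accepting set $F$ of the counting multichannel type automaton, and for the effectivity claim dovetail the enumerations of the $\Phi_{\varphi_i}$ and invoke Remark~\ref{computability_remark} for $\delta$. Your explicit remark that $\tau^{(M,w)}_n\in\Phi_\varphi$ forces $n=n_\varphi$ is a point the paper leaves implicit, but otherwise the arguments coincide.
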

\begin{proof}
    Assume we have a class $\cK$ of finite pointed $(\Pi, I)$-models definable by a countable disjunction $\psi \colonequals \bigvee_{\varphi \in S} \varphi$ of $(\Pi, I)$-formulae $\varphi$ of $\GMML$.
    %In other words, there exists a $\Pi$-formula $\varphi$ of $\GML$ such that $(M, w) \models \varphi$ if and only if $(M, w) \in \cK$.
    %We must define a counting message passing automaton with a set $F$ of accepting states such that $(M, w) \in \cK$ if and only if $f_{n}(w) \in F$ for some $n \in \N$.
    By Lemma \ref{Formula_to_types_GMML}, each $\varphi \in S$ is logically equivalent with a recursively enumerable disjunction $\varphi^*$ of full graded multimodal $(\Pi, I)$-types of $\GMML$ such that $\md(\varphi^*) = \md(\varphi)$.
    %Let $\psi^* \coloneqq \bigvee_{\varphi \in S} \varphi^*$ \textcolor{magenta}{(i.e., $\psi^{*}$ is the countable disjunction of all disjuncts of $\varphi^{*}$ for all $\varphi \in S$)}.
    We define a counting multichannel type automaton $A$ whose set of accepting states $F$ is the set of types that appear as disjuncts of $\varphi^*$ for any $\varphi \in S$. Now $(M, w) \models \psi$ if and only if $(M, w) \models \tau$ for some $\tau \in F$ if and only if the state of $(M,w)$ in round $\md(\tau)$ in $A$ is $\tau$. Ergo, $A$ accepts exactly the pointed $(\Pi, I)$-models in $\cK$.

    Now, assume the disjunction $\psi$ is recursively enumerable, i.e., the set $S$ is recursively enumerable. For each $\varphi \in S$ we have a recursively enumerable disjunction $\varphi^{*}$ of full graded multimodal types, and these types are the accepting states of the automaton. In other words, we have a recursively enumerable disjunction of recursively enumerable disjunctions. To show that the resulting set $F$ itself is recursively enumerable, we know that we can recursively enumerate the formulae $\varphi \in S$, and that for each $\varphi \in S$ we can recursively enumerate the types in $\varphi^{*}$. We can thus recursively enumerate all states in $F$ diagonally, by starting with the first type in the first disjunct, then the second type of the first disjunct and the first disjunct in the second disjunct, and so forth. The computability of $\delta$ was established already in Remark \ref{computability_remark}. Thus, the automaton $A$ is recursively enumerable.
\end{proof}

Next we prove the claim from right to left, from automata to formulae. We begin again with a lemma, which shows that all counting multichannel message passing automata agree on any two pointed models that share the same full graded multimodal type, up to the round indicated by the type's modal depth, and vice versa.

\begin{lemma}\label{Types_to_states_GMML}
    %Let $\Pi$ be a finite set of proposition symbols, and let $I = [a]$ be a set of indices ($a \in \Z_{+}$).
    Two finite pointed $(\Pi, I)$-models $(M, w)$ and $(N, v)$ satisfy exactly the same full graded multimodal $(\Pi, I)$-type of modal depth $n$ if and only if they share the same state in each round (up to $n$) for each counting multichannel message passing automaton $A$ for $(\Pi, I)$.
\end{lemma}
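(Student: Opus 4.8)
We prove the two implications separately. For the direction from agreement of states to agreement of types, assume that $(M,w)$ and $(N,v)$ share the same state in every round up to $n$ for every $\cmmpa$ for $(\Pi,I)$, and apply this hypothesis to the counting multichannel type automaton for $(\Pi,I)$ and to round $t=n$. Recall that the state of that automaton at a finite pointed model in round $n$ is precisely that model's full graded multimodal $(\Pi,I)$-type of modal depth $n$; hence $(M,w)$ and $(N,v)$ carry the same such type, as required.

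For the converse, the key claim is: \emph{for every $\cmmpa$ $A$ for $(\Pi,I)$ and every $t\in\N$, if two finite pointed $(\Pi,I)$-models satisfy a common full graded multimodal type of modal depth $t$, then they have the same state in round $t$ of $A$}. Granting this, the converse follows: if $(M,w)$ and $(N,v)$ satisfy a common full type of modal depth $n$, then by the fact already used in the proof of Lemma~\ref{Formula_to_types_GMML} (two models satisfying a common full type of modal depth $n$ agree on every $\GMML$-formula of modal depth at most $n$) they also satisfy a common full type of modal depth $t$ for every $t\le n$, since the depth-$t$ full type of one is a formula of modal depth at most $n$ satisfied by the other; the claim then yields agreement of states in every round up to $n$, for every $\cmmpa$.

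The claim is proved by induction on $t$. For $t=0$, the round-$0$ state of $A$ at a model is $\pi$ applied to the set of proposition symbols true there, while the depth-$0$ full type is exactly the propositional conjunction (or $\top$) recording that set; so a common depth-$0$ type forces the same set and hence the same round-$0$ state. For the inductive step, suppose $(M,w)$ and $(N,v)$ satisfy a common full type $\sigma$ of modal depth $t+1$. Then, just as in the previous paragraph, they also satisfy a common full type of modal depth $t$, so the induction hypothesis gives equal round-$t$ states at $w$ and $v$. Moreover, for each $\alpha\in I$ the formula $\sigma$ records \emph{exactly} (because it is a full type, not merely up to a threshold) the multiset of depth-$t$ full types realized among the $\alpha$-neighbours of the point: the formulae appearing immediately under the diamonds $\langle\alpha\rangle$ in $\sigma$ are precisely those neighbour depth-$t$ full types, the conjuncts $\langle\alpha\rangle_{=\ell}\tau$ give their exact multiplicities, and $\langle\alpha\rangle_{=\abs{\cN^{\alpha}}}\top$ gives the total. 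Applying the induction hypothesis to each neighbour (neighbours of $w$ and of $v$ sharing a common depth-$t$ full type have equal round-$t$ states), the matching neighbour-type multisets yield equal multisets $N_{1},\dots,N_{\abs{I}}$ of round-$t$ neighbour states at $w$ and at $v$; feeding these equal inputs into $\delta$ produces equal round-$(t+1)$ states, completing the induction.

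The step that needs the most care is the second half of the inductive step: one must check that the subtypes occurring under the diamonds of a depth-$(t+1)$ full type of $w$ are genuinely depth-$t$ full types of the neighbours of $w$ --- this rests on the width bounds $k_{i,\alpha}$ in the definition of a full type strictly exceeding the relevant neighbourhood sizes, a property that transfers correctly when one moves one hop outward --- and that those conjuncts record exact counts, so that $\sigma$ pins down the neighbour-type multisets on the nose. One should also keep in mind that ``full type of modal depth $t$'' denotes a family of logically equivalent formulae rather than a single one, which is why the induction hypothesis is phrased in terms of two models satisfying \emph{a common} such type. The remaining ingredients --- the base case, the passage from $\delta$'s inputs to its output, and the entire states-to-types direction --- are routine once the correctness of the type automaton and the indistinguishability property of full types, both already available, are granted.
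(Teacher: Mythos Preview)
Your argument is correct. The main structural difference from the paper is that the paper proves the biconditional by a single simultaneous induction on $n$, running a chain of equivalences (same depth-$(n{+}1)$ type $\Leftrightarrow$ same propositions and same multiset of depth-$n$ neighbour types $\Leftrightarrow$ via the induction hypothesis, same propositions and same multisets of neighbour states in all rounds $\le n$ $\Leftrightarrow$ same state in all rounds $\le n{+}1$), whereas you split the two directions. For states-to-types you invoke the type automaton as a single witness; for types-to-states you run a one-directional induction and separately observe that agreement at depth $n$ cascades down to all $t\le n$. Your decomposition is clean and makes the states-to-types direction a one-liner, at the cost of depending on the correctness of the type automaton, which the paper only asserts (``It can be seen that \dots'') rather than proves; the paper's proof of the lemma is by contrast self-contained. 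The content of your inductive step---that a full depth-$(t{+}1)$ type records exact neighbour-type multisets and that the subformulae under the diamonds are themselves full depth-$t$ types of the neighbours---matches the paper's unfolding of condition \textbf{2)}, so the substantive analysis is the same.
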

\begin{proof}
    We show the claim by induction over $n$. Let $n = 0$. Two pointed $(\Pi, I)$-models $(M, w)$ and $(N, v)$ share the same full graded multimodal $(\Pi, I)$-type of modal depth $0$, if and only if they satisfy the exact same proposition symbols, if and only if each initializing function $\pi$ assigns them both the same initial state.
    
    Now assume the claim holds for $n$. Two pointed $(\Pi, I)$-models $(M, w)$ and $(N, v)$ satisfy the same full graded multimodal $(\Pi, I)$-type of modal depth $n + 1$ if and only if \textbf{1)} they satisfy the same full graded multimodal $(\Pi, I)$-type of modal depth $0$ and \textbf{2)} for each full graded multimodal $(\Pi, I)$-type $\tau$ of modal depth $n$, they have the same number of $\alpha$-neighbors that satisfy $\tau$ for each $\alpha \in I$. Now \textbf{1)} holds if and only if $(M, w)$ and $(N, v)$ satisfy the same proposition symbols, and by the induction hypothesis \textbf{2)} is equivalent to the $\alpha$-neighbors of $(M, w)$ and $(N, v)$ sharing (pair-wise) the same state in each round (up to $n$) for each $\alpha \in I$ for each $\cmmpa$. This is equivalent to $(M, w)$ and $(N, v)$ sharing the same full graded multimodal $(\Pi, I)$-type of modal depth $0$ not within the scope of a diamond in their state, and receiving the same multiset of states as messages in round $n$ for each $\alpha \in I$ in each $\cmmpa$. By the definition of the transition function, this is equivalent to $(M, w)$ and $(N, v)$ sharing the same state in round $n + 1$ for each $\cmmpa$.
\end{proof}

Now we are ready to show the direction from right to left.

\begin{theorem}\label{thrm: cmmpa to gmml}
    %Let $\Pi$ be a finite set of proposition symbols and let $I = [a]$ be a set of indices ($a \in \Z_{+}$).
    Each class of finite pointed $(\Pi, I)$-models recognizable by a counting multichannel message passing automaton for $(\Pi, I)$ is definable by a countable disjunction of $(\Pi, I)$-formulae of $\GMML$.
    Moreover, if the automaton is recursively enumerable, then so is the disjunction.
\end{theorem}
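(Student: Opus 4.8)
The plan is to build the defining disjunction directly out of full graded multimodal types, using Lemma~\ref{Types_to_states_GMML} to decide which types drive an automaton into an accepting state. Fix a $\cmmpa$ $A = (Q, \pi, \delta, F)$ recognizing $\cK$. The key starting point is that, by Lemma~\ref{Types_to_states_GMML}, for every $t \in \N$ the state $f_{t}(w)$ that the run of $A$ on a finite pointed model $(M,w)$ reaches in round $t$ is determined entirely by the full graded multimodal $(\Pi, I)$-type $\tau^{(M,w)}_{t}$ of modal depth $t$: if $(M,w)$ and $(N,v)$ satisfy the same such type, then $f_{t}(w) = f_{t}(v)$. Hence for each $t$ we may define
\[
  \Phi_{t} \colonequals \{\, \tau^{(M,w)}_{t} \mid (M,w)\text{ a finite pointed }(\Pi,I)\text{-model with } f_{t}(w) \in F \,\},
\]
a set of full graded multimodal types of modal depth $t$, and set $\Phi \colonequals \bigcup_{t \in \N}\Phi_{t}$ and $\psi \colonequals \bigvee_{\tau \in \Phi}\tau$.

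The next step is to verify that $\psi$ defines $\cK$. For any finite pointed model $(M,w)$: if $f_{t}(w) \in F$ for some $t$, then $\tau^{(M,w)}_{t} \in \Phi$ and $(M,w) \models \tau^{(M,w)}_{t}$, so $(M,w)\models\psi$; conversely, if $(M,w)\models\tau$ for some $\tau \in \Phi$, say $\tau = \tau^{(N,v)}_{t}$ with $f_{t}(v)\in F$, then since a pointed model satisfies a full type of modal depth $t$ precisely when that type is its own full type of modal depth $t$ (full types pin down the whole $t$-neighbourhood), we get $\tau^{(M,w)}_{t} = \tau^{(N,v)}_{t}$, whence $f_{t}(w) = f_{t}(v) \in F$ by Lemma~\ref{Types_to_states_GMML}, so $A$ accepts $(M,w)$. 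Thus $(M,w)\in\cK$ if and only if $(M,w)\models\psi$. Moreover $\psi$ is a \emph{countable} disjunction, since $\Phi$ is a set of $(\Pi,I)$-formulae of $\GMML$ and there are only countably many such formulae in total; here it does no harm that modal depth fails to bound the width (cf.\ the remark before Lemma~\ref{Formula_to_types_GMML}), as we need only countability, not finiteness. This proves the first assertion.

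For the ``moreover'' part, assume $A$ is recursively enumerable, i.e.\ $F$ is r.e.\ and $\delta$ is computable; it then suffices to exhibit a Turing machine enumerating $\Phi$. The machine dovetails over all triples $((M,w), t, s)$, where $(M,w)$ ranges over an effective enumeration of the (countably many) finite pointed $(\Pi,I)$-models, $t \in \N$, and $s \in \N$. For a given triple it first computes $f_{t}(w)$, which is possible because $(M,w)$ is finite and explicitly given, $\pi$ is given explicitly (its domain $\wp(\Pi)$ being finite), and $\delta$ is computable, so the configurations $f_{0}, f_{1}, \dots, f_{t}$ can be computed round by round as functions $W \to Q$; it then runs the enumerator of $F$ for $s$ steps, and if $f_{t}(w)$ has already appeared among the first $s$ enumerated elements of $F$, it outputs the formula $\tau^{(M,w)}_{t}$, which is computable from $(M,w)$ and $t$ directly from the recursive definition of full types (essentially the computation of Remark~\ref{computability_remark}, iterated over the rounds). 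The formulae output this way are exactly $\bigcup_{t}\{\, \tau^{(M,w)}_{t} \mid f_{t}(w)\in F \,\} = \Phi$, so $\psi$ is a recursively enumerable disjunction.

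The step I expect to be the main obstacle is the correctness check in the second paragraph — making sure that ``the round-$t$ state lies in $F$'' is genuinely equivalent to ``the model satisfies $\bigvee_{\tau \in \Phi_{t}}\tau$''. This rests on the fact that satisfying a full graded multimodal type of modal depth $t$ forces that type to be one's own full type of modal depth $t$ (so that Lemma~\ref{Types_to_states_GMML} is applicable), which is where the structural properties of full types — in particular that they determine the entire $t$-hop neighbourhood — actually get used. Once that is in place, the rest is bookkeeping, and the recursive-enumerability claim is a routine dovetailing argument given that $\delta$ is computable and $F$ is r.e.
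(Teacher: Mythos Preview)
Your proof is correct and follows essentially the same approach as the paper: define $\Phi$ as the set of full graded multimodal types $\tau^{(M,w)}_{t}$ of pointed models accepted in round $t$, and use Lemma~\ref{Types_to_states_GMML} to verify that $\bigvee_{\tau\in\Phi}\tau$ defines $\cK$. The only cosmetic difference is in the recursive-enumerability argument, where the paper enumerates full types directly and builds a canonical tree model for each before running $A$, whereas you dovetail over all finite pointed models and rounds; both are valid and amount to the same idea.
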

\begin{proof}
    Assume that the class $\cK$ of finite pointed $(\Pi, I)$-models is recognizable by the counting multichannel message passing automaton $A$. Let $\cT$ be the set of all full graded multimodal $(\Pi, I)$-types and let 
    $$
    \Phi = \{\, \tau^{(M,w)}_{n} \in \cT \mid \text{$A$ accepts the pointed $(\Pi, I)$-model $(M, w) \in \cK$ in round $n$} \,\}.
    $$
    We define the countable disjunction $\bigvee_{\tau \in \Phi} \tau$ and show that $(M, w) \models \bigvee_{\tau \in \Phi} \tau$ if and only if $A$ accepts $(M, w)$.
    
    If $(M, w) \models \bigvee_{\tau \in \Phi} \tau$, then $(M, w) \models \tau^{(N,v)}_{n}$ for some pointed $(\Pi, I)$-model $(N, v)$ accepted by $A$ in round $n$. This means that $(M,w)$ and $(N,v)$ satisfy the same full graded multimodal $(\Pi, I)$-type of modal depth $n$. By Lemma \ref{Types_to_states_GMML}, this means that $(M, w)$ and $(N, v)$ share the same state in $A$ in each round $\ell \leq n$. Since $A$ accepts $(N, v)$ in round $n$, $A$ also accepts $(M, w)$ in round $n$. Conversely, if $A$ accepts $(M, w)$ in round $n$, then $\tau^{(M,w)}_{n} \in \Phi$ and thus $(M, w) \models \bigvee_{\tau \in \Phi} \tau$.

    Now, to prove the second claim, assume that the automaton is recursively enumerable. The only thing we have to show is that $\Phi$ is recursively enumerable. Given a full type $\tau_n^{(M,w)} \in \cT$, we can easily construct the smallest (finite) tree model $T^{(M,w)}_n$ with the root $r$ such that $(T^{(M,w)}_n, r) \models \tau_n^{(M,w)}$, i.e., there exists a computable function $\cT \to S$, where $S =  \{\,(T^{(M,w)}_n, r) \mid \tau_n^{(M,w)} \in \cT\,\}$. Therefore, since $\cT$ is recursively enumerable, then the set $S =  \{\,(T^{(M,w)}_n, r) \mid \tau_n^{(M,w)} \in \cT\,\}$ is also. To check if $\tau_n^{(M,w)} \in \Phi$, we simply first construct $(T^{(M,w)}_n,r)$ and feed it to the automaton. Since the automaton is recursively enumerable then $\Phi$ is also.
\end{proof}

Having proved the implication in both directions, we have shown the desired equivalence between classes of finite pointed models definable by countable disjunctions of $\GMML$-formulae and those recognizable by a $\cmmpa$.\footnote{We note that a similar characterization could be obtained for infinite models by using any graded multimodal types instead of only full types and modifying the automata accordingly, but we leave this for possible future work.} Combining both directions, we obtain the following corollaries.

\begin{corollary}\label{crl: type cmmpa and gmml}
    %Let $\Pi$ be a finite set of proposition symbols and let $I = [a]$ be a set of indices ($a \in \Z_{+}$).
    Each class of finite pointed $(\Pi, I)$-models definable by a countable disjunction of $(\Pi, I)$-formulae of $\GMML$ is recognizable by a counting multichannel message passing automaton for $(\Pi, I)$, and vice versa. Moreover, the countable disjunction is recursively enumerable if and only if the automaton is as well.
\end{corollary}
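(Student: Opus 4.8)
\emph{Proof plan.} The plan is to assemble the corollary directly from the two theorems just proved, keeping track of the recursive-enumerability bookkeeping on both sides. For the direction from logic to automata I would invoke Theorem~\ref{thrm: gmml to cmmpa}: any class $\cK$ of finite pointed $(\Pi,I)$-models definable by a countable disjunction of $(\Pi,I)$-formulae of $\GMML$ is recognizable by a counting multichannel \emph{type} automaton for $(\Pi,I)$, and since every counting multichannel type automaton is by definition a $\cmmpa$, $\cK$ is recognizable by a $\cmmpa$. For the converse, Theorem~\ref{thrm: cmmpa to gmml} states that any class recognizable by a $\cmmpa$ for $(\Pi,I)$ is definable by a countable disjunction of $(\Pi,I)$-formulae of $\GMML$. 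Chaining these two inclusions yields the claimed equivalence.

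For the ``moreover'' clause I would establish the biconditional through its two implications. If the disjunction is recursively enumerable, the ``moreover'' part of Theorem~\ref{thrm: gmml to cmmpa} already produces a recursively enumerable type automaton: its accepting set is obtained by Lemma~\ref{Formula_to_types_GMML} and a diagonal enumeration over the finitely many full types of each successive width, while the transition function is computable by Remark~\ref{computability_remark}. Being a $\cmmpa$, this is in particular a recursively enumerable $\cmmpa$. Conversely, if the $\cmmpa$ is recursively enumerable, the ``moreover'' part of Theorem~\ref{thrm: cmmpa to gmml} produces a recursively enumerable disjunction, using the computable map that sends a full graded multimodal type to its canonical finite tree model and then feeds that model to the automaton. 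Hence the disjunction is recursively enumerable if and only if the automaton is.

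I expect no genuine obstacle here, since all the mathematical content is carried by Theorems~\ref{thrm: gmml to cmmpa} and~\ref{thrm: cmmpa to gmml} together with Lemmas~\ref{Formula_to_types_GMML} and~\ref{Types_to_states_GMML} and Remark~\ref{computability_remark}. The only point deserving a word of care is the observation that the type automaton delivered by Theorem~\ref{thrm: gmml to cmmpa} is a legitimate $\cmmpa$, so that the two sides of the corollary genuinely line up; indeed one could strengthen the statement to read ``recognizable by a counting multichannel type automaton'' on both ends. A secondary point worth stating explicitly is that ``recursively enumerable disjunction'' and ``recursively enumerable automaton'' are exactly the hypotheses under which the two theorems yield their respective ``moreover'' conclusions, so the biconditional in the corollary requires no new construction beyond combining them.
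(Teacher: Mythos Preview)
Your proposal is correct and matches the paper's approach: the paper presents this corollary without a separate proof, simply stating that it follows by ``combining both directions'' of Theorems~\ref{thrm: gmml to cmmpa} and~\ref{thrm: cmmpa to gmml}. Your write-up spells out exactly this combination, including the observation that a type automaton is a $\cmmpa$ and that the ``moreover'' clauses of the two theorems yield the recursive-enumerability biconditional.
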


\begin{corollary}\label{crl: cmmpa to type cmmpa}
    %Let $\Pi$ be a finite set of proposition symbols and let $I = [a]$ be a set of indices ($a \in \Z_{+}$).
    Each class of finite pointed $(\Pi, I)$-models recognizable by a counting multichannel message passing automaton for $(\Pi, I)$ can also be recognized by a counting multichannel type automaton for $(\Pi, I)$. Moreover, if the counting multichannel message passing automaton is recursively enumerable, then so is the counting multichannel type automaton.
\end{corollary}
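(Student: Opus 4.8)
The plan is to obtain this corollary purely by composing the two main theorems already proved, with no new construction required. Suppose $\cK$ is a class of finite pointed $(\Pi, I)$-models recognizable by a counting multichannel message passing automaton $A$ for $(\Pi, I)$. First I would apply Theorem \ref{thrm: cmmpa to gmml} to $A$: this yields a countable disjunction $\bigvee_{\varphi \in S} \varphi$ of $(\Pi, I)$-formulae of $\GMML$ that defines exactly $\cK$. Then I would apply Theorem \ref{thrm: gmml to cmmpa} to this disjunction: since $\cK$ is definable by a countable disjunction of $(\Pi, I)$-formulae of $\GMML$, it is recognizable by a counting multichannel type automaton for $(\Pi, I)$. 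Chaining the two equivalences gives exactly the first assertion of the corollary.

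For the ``moreover'' part, I would track the recursive-enumerability hypothesis through the same chain. If $A$ is recursively enumerable, then the second clause of Theorem \ref{thrm: cmmpa to gmml} guarantees that the disjunction $\bigvee_{\varphi \in S} \varphi$ produced above is recursively enumerable. Feeding this recursively enumerable disjunction into the second clause of Theorem \ref{thrm: gmml to cmmpa} then yields a recursively enumerable counting multichannel type automaton recognizing $\cK$. So each link of the chain preserves recursive enumerability, and the composite does as well.

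There is essentially no hard step here: the content has all been front-loaded into Theorems \ref{thrm: cmmpa to gmml} and \ref{thrm: gmml to cmmpa} (and, underneath them, Lemmas \ref{Formula_to_types_GMML} and \ref{Types_to_states_GMML}). The only point requiring a moment's care is bookkeeping: checking that the output format of Theorem \ref{thrm: cmmpa to gmml} (a countable, possibly recursively enumerable, disjunction of $\GMML$-formulae) is exactly the input format expected by Theorem \ref{thrm: gmml to cmmpa}, so that the two results can be applied back-to-back, and that ``recursively enumerable'' in both statements refers to the same notion. Both checks are immediate from the statements as given, so the proof will be a short paragraph.

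\begin{proof}
    Let $\cK$ be a class of finite pointed $(\Pi, I)$-models recognizable by a counting multichannel message passing automaton $A$ for $(\Pi, I)$. By Theorem \ref{thrm: cmmpa to gmml}, $\cK$ is definable by a countable disjunction $\bigvee_{\varphi \in S} \varphi$ of $(\Pi, I)$-formulae of $\GMML$. By Theorem \ref{thrm: gmml to cmmpa}, any class definable by such a countable disjunction is recognizable by a counting multichannel type automaton for $(\Pi, I)$; hence $\cK$ is recognizable by a counting multichannel type automaton.

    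For the final claim, assume in addition that $A$ is recursively enumerable. Then the second part of Theorem \ref{thrm: cmmpa to gmml} yields that the disjunction $\bigvee_{\varphi \in S} \varphi$ obtained above is recursively enumerable, and the second part of Theorem \ref{thrm: gmml to cmmpa} then yields that the corresponding counting multichannel type automaton is recursively enumerable as well.
\end{proof}
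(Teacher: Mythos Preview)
Your proposal is correct and matches the paper's intended argument: the corollary is stated without a separate proof precisely because it follows immediately by composing Theorem~\ref{thrm: cmmpa to gmml} and then Theorem~\ref{thrm: gmml to cmmpa}, with recursive enumerability preserved at each step. There is nothing to add.
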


Note that by Remark \ref{computability_remark} and Corollaries \ref{crl: type cmmpa and gmml} and \ref{crl: cmmpa to type cmmpa}, we can translate both recursively enumerable disjunctions and recursively enumerable automata into ``syntactically decidable type automata'', i.e., 
recursively enumerable type automata with a transition function computable in polynomial time.

In the case where $\Pi = \emptyset$ and $I = [1]$, we also receive the following corollary.

\begin{corollary}
    Each class of finite pointed $(\emptyset, [1])$-models definable by a countable disjunction of $(\emptyset, [1])$-formulae of $\GMML$ is recognizable by the WL-algorithm, and vice versa.
\end{corollary}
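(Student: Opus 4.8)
The plan is to observe that this final corollary is essentially a specialization of Corollary~\ref{crl: type cmmpa and gmml} together with the identification, set up in the preliminaries, between a counting multichannel type automaton for $(\emptyset, [1])$ and the WL-algorithm. Concretely, I would argue as follows. First, by Corollary~\ref{crl: type cmmpa and gmml}, a class $\cK$ of finite pointed $(\emptyset, [1])$-models is definable by a countable disjunction of $(\emptyset, [1])$-formulae of $\GMML$ if and only if it is recognizable by a $\cmmpa$ for $(\emptyset, [1])$, and by Corollary~\ref{crl: cmmpa to type cmmpa} we may take this automaton to be a counting multichannel type automaton. So it remains only to verify that ``recognizable by a counting multichannel type automaton for $(\emptyset, [1])$'' coincides with ``recognizable by the WL-algorithm'' in the sense intended in the paper.

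For that last identification, I would recall the discussion preceding the definition of the counting multichannel type automaton: with $\Pi = \emptyset$ the initial state of every node is the single modal-depth-$0$ type $\top$, which is exactly the classical $1$-coloring that the WL-algorithm starts from; and with $I = [1]$, where $R_1$ is irreflexive and symmetric, the single accessibility relation is exactly the edge relation of an undirected simple graph. The transition function $\delta$ of the type automaton assigns to a node the new type recording, for each $k$, how many neighbors carry each depth-$n$ type (capped by the neighborhood size and marked with $\langle 1\rangle_{=|N_1|}\top$), which is precisely the refinement step of the WL-algorithm: the new color is an injective function of the old color and the multiset of neighbors' colors. Hence the sequence of global configurations of the type automaton is, round by round, the sequence of colorings produced by the WL-algorithm (this is the content of the remark ``the state of $A$ at $(M,w)$ in round $n$ is $\tau$ iff $\tau = \tau^{(M,w)}_n$''). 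Picking a set of accepting states $F \subseteq \cT$ is then the same as picking a set of WL-colors to accept, so the classes recognized by the two formalisms are literally the same.

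I would then assemble these two observations: $\cK$ is definable by a countable disjunction of $(\emptyset,[1])$-formulae of $\GMML$ $\iff$ $\cK$ is recognizable by a $\cmmpa$ for $(\emptyset,[1])$ $\iff$ $\cK$ is recognizable by a counting multichannel type automaton for $(\emptyset,[1])$ $\iff$ $\cK$ is recognizable by the WL-algorithm. The forward and backward directions of the corollary follow by reading this chain in the two directions.

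The only genuinely delicate point — and the one I would spend the most care on — is pinning down what ``recognizable by the WL-algorithm'' is taken to mean, since the WL-algorithm as usually stated produces a stable coloring rather than a yes/no answer on pointed models. The paper has already addressed this informally (a binary classification ``on top of a coloring'', accepting nodes of certain colors, and noting the stable coloring is reached by round $n$ where $n$ is the size of the model), so I would simply make that reading explicit: a class $\cK$ is recognizable by the WL-algorithm if there is a set of colors such that $(M,w)\in\cK$ iff $w$ ever receives one of those colors during the run of the algorithm on $M$ — equivalently, by the stability bound, iff $w$'s stable color lies in that set. Under this reading the identification with type automata is immediate and the corollary follows; I would keep the argument short and point to the preceding subsection and to Corollaries~\ref{crl: type cmmpa and gmml} and~\ref{crl: cmmpa to type cmmpa} rather than re-deriving anything.
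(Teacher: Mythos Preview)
Your proposal is correct and matches the paper's own treatment: the corollary is stated without a separate proof, as an immediate specialization of Corollaries~\ref{crl: type cmmpa and gmml} and~\ref{crl: cmmpa to type cmmpa} to $(\Pi,I)=(\emptyset,[1])$, together with the identification of the counting multichannel type automaton with the WL-algorithm set up in the preliminaries and the clarifying note after the corollary about what ``recognizable by the WL-algorithm'' means. Your write-up is, if anything, more explicit than the paper's; the only minor point to watch is that the corollary is stated for arbitrary $(\emptyset,[1])$-models, so you should not build the irreflexive/symmetric restriction into the argument itself---that was mentioned in the preliminaries only to single out the classical simple-graph case.
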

Note that when we say that a class of pointed models is recognizable by the WL-algorithm, we mean the following. As the WL-algorithm partitions pointed models into isomorphism classes, two pointed models in the same isomorphism class are either both in the recognized class, or neither of them is.

%\section{Another way of describing the WL-algorithm with a formula}

\section{Expressing WL via first-order logic with Härtig's quantifier and greatest fixed points}

For an alternative, finite formula, that carries out the WL-algorithm, consider first-order logic. We amend the logic with Härtig's quantifier and greatest fixed points.

We denote the first-order structures with $\cM$, $\cN$, $\ldots$ and so on. The domain of the structure $\cM$ is denoted by $\dom(\cM)$. 
We let $H$ denote Härtig's quantifier. If $\varphi$ and $\psi$ are $\mathrm{FO}$-formulae and $x$ and $y$ are variable symbols, then $Hxy[\varphi, \psi]$ is a formula. We define that a model $\cM$ satisfies this formula if the number of elements in the domain of the model that satisfy $\varphi$ is the same as the number of elements that satisfy $\psi$. More formally, $\abs{\{\, a \in \dom(\cM) \mid \cM \models \varphi(a) \,\}} = \abs{\{\, b \in \dom(\cM) \mid \cM \models \psi(b) \,\}}$. 

For greatest fixed points, let $\varphi(x,y)$ be a $\mathrm{FO}$-formula (which can also contain Härtig's quantifiers), and let $W$ be a $2$-ary relation symbol. In this case, $\GFP_{W, x, y}[\varphi]$ is a formula. We define the truth of such a formula in a model $\cM$ as follows. Let $W^{0} \colonequals \dom(\cM)^{2}$. Now, assume we have defined the relation $W^{n}$ for some $n \in N$. We let $W^{n+1}$ denote the set of pairs $(a,b) \in \dom(M)^{2}$ such that $M \models \varphi(a,b)$, where each appearance of $W$ in $\varphi$ is interpreted as $W^{n}$. Now $\cM \models \GFP_{W, x, y}[\varphi](a,b)$ if and only if $(a,b) \in W^{n}$ for all $n \in \N$.

First, we informally describe (and later prove formally) how the Weisfeiler-Leman algorithm is described by a single formula. The formula is defined as follows:
\[
   \varphi_{\mathsf{WL}}(x,y) \colonequals \mathrm{GFP}_{W, x, y} [Wxy \land \forall z ( Hyx[ Exy \land Wyz, Eyx \land Wxz])], 
\]
where $E$ is the $2$-ary edge relation over the model according to which the WL-algorithm is run. The subformula $Hyx[ Exy \land Wyz, Eyx \land Wxz]$ checks that $x$ and $y$ have the same number of neighbors that have the same particular color. The universal quantification checks that this holds for all colors, and the formula $Wxy$ checks that $x$ and $y$ also had the same color in the previous round. In the context of this formula, the relation $W^{n}$ from the previous paragraph is the equivalence relation induced by the WL-algorithm over the elements of $\cM$ after $n$ rounds. By checking the truth of this formula iteratively indefinitely via the greatest fixed point, we clearly arrive at the same outcome as the WL-algorithm.

Next we formally show that the formula $\varphi_{\mathsf{WL}}(x,y)$ defines the Weisfeiler-Leman algorithm. Before that we define a few things.
Given two models $\cM$ and $\cN$, we let $\cM \uplus \cN$ denote the \textbf{disjoint union of the models} $\cM$ and $\cN$.
Given a model $\cM$ and $w \in \dom(\cM)$, we let $\mathsf{WL}^{\infty}(\cM,w)$ denote the color of the point $w$ in $\cM$ in the stable coloring of $\cM$ produced by the WL-algorithm. Moreover, if Weisfeiler-Leman stops with an input model $\cM$ in round $m$, then for all $n \leq m$ we let $\mathsf{WL}^n(\cM,w)$ denote the color of the point $w$ in $\cM$ in round $n$ produced by the Weisfeiler-Leman algorithm, and if $n > m$ we let $\mathsf{WL}^n(\cM, w) = \mathsf{WL}^{\infty}(\cM,w)$. 

\begin{theorem}
    For any two finite models $\cM$ and $\cN$ and points $w$ in $\cM$ and $u$ in $\cN$, we have
    \[
    \mathsf{WL}^{\infty}(\cM, w) = \mathsf{WL}^{\infty}(\cN,u) \iff \cM \uplus \cN \models \varphi_{\mathsf{WL}}(w, u).
    \]
\end{theorem}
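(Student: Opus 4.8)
The plan is to collapse both models into $\cP \colonequals \cM \uplus \cN$ and to show that the approximant relations $W^{0} \supseteq W^{1} \supseteq \cdots$ occurring in the semantics of $\varphi_{\mathsf{WL}}$ are precisely the equivalence relations ``having the same colour'' after the corresponding rounds of the WL-algorithm. First I would prove, for an arbitrary finite structure $\cP$ and all $a,b \in \dom(\cP)$, the statement
\[
(a,b) \in W^{n} \iff \mathsf{WL}^{n}(\cP, a) = \mathsf{WL}^{n}(\cP, b)
\]
by induction on $n$. The base case is immediate since $W^{0} = \dom(\cP)^{2}$ and the colouring at round $0$ is the $1$-colouring, so both sides hold for all $a,b$. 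For the step, unwinding the definitions of $\GFP$ and of Härtig's quantifier shows that $(a,b) \in W^{n+1}$ holds exactly when $(a,b) \in W^{n}$ and, for every $c \in \dom(\cP)$, the number of $E$-successors $d$ of $a$ with $(d,c) \in W^{n}$ equals the number of $E$-successors $d'$ of $b$ with $(d',c) \in W^{n}$. (Note the conjunct $Wxy$, which in particular forces $W^{n+1} \subseteq W^{n}$, so in a finite structure the sequence stabilises, matching the fact that WL stabilises by round $\abs{\dom(\cP)}$ at the latest.) By the induction hypothesis $(d,c) \in W^{n}$ says that $d$ has the same round-$n$ colour as $c$; as $c$ runs over $\dom(\cP)$ it runs over a representative of every round-$n$ colour, so the displayed condition says precisely that $a$ and $b$ have, for each colour, the same number of $E$-successors of that colour. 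Together with $(a,b) \in W^{n}$ this is exactly the condition under which the algorithm assigns $a$ and $b$ the same colour in round $n+1$. Intersecting over all $n$ yields, for any finite $\cP$ and any $a,b \in \dom(\cP)$,
\[
\cP \models \varphi_{\mathsf{WL}}(a,b) \iff \mathsf{WL}^{\infty}(\cP,a) = \mathsf{WL}^{\infty}(\cP,b).
\]

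Second I would record the locality of the WL-colouring: a node's colour in round $n$ is a function of its colour in round $n-1$ and the multiset of round-$(n-1)$ colours of its $E$-successors, all of which depend only on the node's connected component; hence, for $a \in \dom(\cM)$, the colour of $a$ in every round of WL run on $\cM \uplus \cN$ coincides with its colour in WL run on $\cM$ alone, and symmetrically for nodes of $\cN$. In the language of the approximants this is the identity $W^{n}_{\cM \uplus \cN} \cap \dom(\cM)^{2} = W^{n}_{\cM}$, which is proved by the same induction as above, using that each $W^{n}$ is an equivalence relation (a by-product of the first step) so that colour classes contained entirely in $\dom(\cN)$ impose no constraint on pairs from $\dom(\cM)^{2}$. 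Consequently $\mathsf{WL}^{\infty}(\cM \uplus \cN, w) = \mathsf{WL}^{\infty}(\cM, w)$ and $\mathsf{WL}^{\infty}(\cM \uplus \cN, u) = \mathsf{WL}^{\infty}(\cN, u)$.

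Finally I would combine the two: the first step applied to $\cP = \cM \uplus \cN$ gives $\cM \uplus \cN \models \varphi_{\mathsf{WL}}(w,u)$ iff $\mathsf{WL}^{\infty}(\cM \uplus \cN, w) = \mathsf{WL}^{\infty}(\cM \uplus \cN, u)$, and the locality step rewrites the right-hand side as $\mathsf{WL}^{\infty}(\cM, w) = \mathsf{WL}^{\infty}(\cN, u)$, which is the claim. I expect the only delicate point to be the inductive step of the first part: one must get the roles of the two bound variables and the direction of $E$ in $Hyx[\,Exy \wedge Wyz,\ Eyx \wedge Wxz\,]$ exactly right, and argue carefully that universally quantifying $z$ over domain elements amounts to quantifying over round-$n$ colour classes. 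The locality step is routine, but it is what makes the left-hand side of the theorem — a comparison of colours belonging a priori to two separate runs of the algorithm — well posed, so it is worth spelling out rather than taking for granted.
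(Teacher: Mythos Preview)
Your argument is correct and shares the paper's core strategy: an induction on rounds showing that the approximants $W^{n}$ coincide with the ``same round-$n$ colour'' relation. The organizational difference is that you factor the proof as (i) $W^{n}$ equals the colour equivalence for an \emph{arbitrary} single structure $\cP$, followed by (ii) a separate locality step identifying $\mathsf{WL}^{n}(\cM \uplus \cN, w)$ with $\mathsf{WL}^{n}(\cM, w)$, whereas the paper runs the induction directly on $\cA = \cM \uplus \cN$ for pairs $(w,u)$ with $w \in \dom(\cM)$ and $u \in \dom(\cN)$, tacitly identifying $\mathsf{WL}^{n}(\cM,\cdot)$ with $\mathsf{WL}^{n}(\cA,\cdot)$ on $\dom(\cM)$ (and similarly for $\cN$) without isolating this as a lemma. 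Your decomposition is a little more careful on exactly the point you anticipated --- that the left-hand side of the theorem compares colours from two a priori separate runs --- and your step (i), being stated for all pairs in a single structure, avoids a minor sloppiness in the paper's inductive step, where the hypothesis is stated only for cross pairs $(w,u)$ but is then applied to pairs $(w',a)$ with $a$ ranging over all of $\dom(\cA)$. Both routes arrive at the same conclusion with the same amount of work.
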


\begin{proof}
    We let $\cA = \cM \uplus \cN$. 
    Recall that
    \[
    \varphi_{\mathsf{WL}}(x, y) \colonequals \mathrm{GFP}_{W, x y} [Wxy \land \forall z ( Hyx[ Exy \land Wyz, Eyx \land Wxz])].
    \]
    We prove by induction on $t \in \N$ that for each pointed model $(\cM, w)$ and $(\cN, u)$ we have
    \[
    \mathsf{WL}^{t}(\cM, w) = \mathsf{WL}^{t}(\cN,u) \iff  (w, u) \in W^t.
    \]
    The case for $t = 0$ is straightforward. Assume that the claim holds for $t = n$. We prove the claim for $t = n+1$. 
    %First, assume that $\mathsf{WL}^{n+1}(\cM, w) = \mathsf{WL}^{n+1}(\cN, u)$. 
    First of all, it is easy to see that if $\mathsf{WL}^{n+1}(\cM, w) = \mathsf{WL}^{n+1}(\cN, u)$, then $\mathsf{WL}^{n}(\cM, w) = \mathsf{WL}^{n}(\cN, u)$. Also by the definition of $\varphi_{\mathsf{WL}}(x,y)$, we have $W^{n+1} \subseteq W^n $. 
    By the induction hypothesis, we have $\mathsf{WL}^{n}(\cM, w) = \mathsf{WL}^{n}(\cN, u)$ iff $(w, u) \in W^n$. 
    Moreover, by the induction hypothesis and by the definitions of $\varphi_{\mathsf{WL}}(x,y)$ and the $\mathsf{WL}$-algorithm, for each point $a \in \dom(\cA)$ with $\mathsf{WL}^n(\cA,a) = c$ we have
    \[
    \begin{aligned}
    \abs{\{\, w' \mid (w, w') \in E, \mathsf{WL}^n(\cM, w') = c \,\}}& =  \abs{\{\, u' \mid (u, u') \in E, \mathsf{WL}^n(\cN, u') = c \,\}} \\
    &\iff \\
    \abs{\{\, w' \mid (w, w') \in E, (w',a) \in W^n \,\}}& =  \abs{\{\, u' \mid (u, u') \in E, (u',a) \in W^n \,\}}.
    \end{aligned}
    \]
    In other words, $(w,u) \in W^{n+1}$ iff in round $n$ of the $\mathsf{WL}$-algorithm, $w$ and $u$ have the same color and for each color $c$ there are the same amount of neighbors of $w$ with color $c$ and neighbors of $u$ with color $c$.
    Therefore, we conclude
    \[
    (w, u) \in W^{n+1} \iff \mathsf{WL}^{n+1}(\cM, w) = \mathsf{WL}^{n+1}(\cN,u).
    \]
\end{proof}

It is worth noting that the disjoint union is important for the formula $\varphi_{\mathsf{WL}}(w, u)$, as it does not give names or specifications for the equivalence classes that it produces. If run separately in two different models, we would have no way of comparing the equivalence classes produced for the two models. This also helps to illustrate how capturing classification conditions is more difficult than merely splitting the domain into equivalence classes.

\section{Conclusion}

We have shown an equivalence between classes of pointed models definable with a countable disjunction of GMML-formulae and those recognizable by a specific type of message passing automaton. Furthermore, we have shown an equivalence between classes definable by recursively enumerable disjunctions and those recognizable by recursively enumerable automata. As a corollary we have shown an equivalence between classes definable by a specific kind of countable disjunction and those recognizable by the Weisfeiler-Leman algorithm. We have also established an alternative logic in which the Weisfeiler-Leman algorithm can be expressed using first-order logic, Härtig's quantifier and greatest fixed points. Future work could include refining this logic, or expanding the former characterization for infinite Kripke-models.

\subsection*{Acknowledgments} 
Antti Kuusisto was supported by the Academy of Finland project Theory of computational logics, grant numbers 352419, 352420, 353027, 324435, 328987. Damian Heiman was supported by the same project, grant number 353027. Antti Kuusisto was also supported by the Academy of Finland consortium project Explaining AI via Logic (XAILOG), grant number 345612. Veeti Ahvonen was supported by the Vilho, Yrjö and Kalle Väisälä Foundation of the Finnish Academy of Science and Letters.

\bibliography{references}

\begin{thebibliography}{10}

\bibitem{dist_circ_mfcs}
Veeti Ahvonen, Damian Heiman, Lauri Hella, and Antti Kuusisto.
\newblock Descriptive complexity for distributed computing with circuits.
\newblock In J{\'{e}}r{\^{o}}me Leroux, Sylvain Lombardy, and David Peleg,
  editors, {\em 48th International Symposium on Mathematical Foundations of
  Computer Science, {MFCS} 2023, August 28 to September 1, 2023, Bordeaux,
  France}, volume 272 of {\em LIPIcs}, pages 9:1--9:15. Schloss Dagstuhl -
  Leibniz-Zentrum f{\"{u}}r Informatik, 2023.
\newblock URL: \url{https://doi.org/10.4230/LIPIcs.MFCS.2023.9}, \href
  {https://doi.org/10.4230/LIPICS.MFCS.2023.9}
  {\path{doi:10.4230/LIPICS.MFCS.2023.9}}.

\bibitem{ahvonen2023neural}
Veeti Ahvonen, Damian Heiman, and Antti Kuusisto.
\newblock Descriptive complexity for neural networks via boolean networks.
\newblock {\em CoRR}, abs/2308.06277, 2023.
\newblock URL: \url{https://doi.org/10.48550/arXiv.2308.06277}, \href
  {https://arxiv.org/abs/2308.06277} {\path{arXiv:2308.06277}}, \href
  {https://doi.org/10.48550/ARXIV.2308.06277}
  {\path{doi:10.48550/ARXIV.2308.06277}}.

\bibitem{barcelo}
Pablo Barcel{\'{o}}, Egor~V. Kostylev, Mika{\"{e}}l Monet, Jorge P{\'{e}}rez,
  Juan~L. Reutter, and Juan~Pablo Silva.
\newblock The logical expressiveness of graph neural networks.
\newblock In {\em 8th International Conference on Learning Representations,
  {ICLR} 2020, Addis Ababa, Ethiopia, April 26-30, 2020}. OpenReview.net, 2020.

\bibitem{cai1992optimal}
Jin-Yi Cai, Martin F{\"u}rer, and Neil Immerman.
\newblock An optimal lower bound on the number of variables for graph
  identification.
\newblock {\em Combinatorica}, 12(4):389--410, 1992.

\bibitem{grohe2023descriptive}
Martin Grohe.
\newblock The descriptive complexity of graph neural networks, 2023.
\newblock \href {https://arxiv.org/abs/2303.04613} {\path{arXiv:2303.04613}}.

\bibitem{weakpdc12}
Lauri Hella, Matti J{\"{a}}rvisalo, Antti Kuusisto, Juhana Laurinharju, Tuomo
  Lempi{\"{a}}inen, Kerkko Luosto, Jukka Suomela, and Jonni Virtema.
\newblock Weak models of distributed computing, with connections to modal
  logic.
\newblock In Darek Kowalski and Alessandro Panconesi, editors, {\em {ACM}
  Symposium on Principles of Distributed Computing, {PODC} '12}, pages
  185--194. {ACM}, 2012.

\bibitem{weak_models}
Lauri Hella, Matti J{\"{a}}rvisalo, Antti Kuusisto, Juhana Laurinharju, Tuomo
  Lempi{\"{a}}inen, Kerkko Luosto, Jukka Suomela, and Jonni Virtema.
\newblock Weak models of distributed computing, with connections to modal
  logic.
\newblock {\em Distributed Comput.}, 28(1):31--53, 2015.

\bibitem{Kuusisto13}
Antti Kuusisto.
\newblock {Modal Logic and Distributed Message Passing Automata}.
\newblock In {\em Computer Science Logic 2013 (CSL 2013)}, volume~23 of {\em
  Leibniz International Proceedings in Informatics (LIPIcs)}, pages 452--468,
  2013.

\bibitem{Kuusisto20}
Antti Kuusisto and Fabian Reiter.
\newblock Emptiness problems for distributed automata.
\newblock {\em Inf. Comput.}, 272:1--17, 2020.
\newblock \href {https://doi.org/10.1016/j.ic.2019.104503}
  {\path{doi:10.1016/j.ic.2019.104503}}.

\bibitem{morris2019weisfeiler}
Christopher Morris, Martin Ritzert, Matthias Fey, William~L Hamilton, Jan~Eric
  Lenssen, Gaurav Rattan, and Martin Grohe.
\newblock Weisfeiler and leman go neural: Higher-order graph neural networks.
\newblock In {\em Proceedings of the AAAI conference on artificial
  intelligence}, number~01 in 33, pages 4602--4609, 2019.

\bibitem{reiter}
Fabian Reiter.
\newblock Asynchronous distributed automata: {A} characterization of the modal
  mu-fragment.
\newblock In Ioannis Chatzigiannakis, Piotr Indyk, Fabian Kuhn, and Anca
  Muscholl, editors, {\em 44th International Colloquium on Automata, Languages,
  and Programming, {ICALP} 2017, July 10-14, 2017, Warsaw, Poland}, volume~80
  of {\em LIPIcs}, pages 100:1--100:14, 2017.

\bibitem{weisfeiler1968reduction}
Boris Weisfeiler and Andrei Leman.
\newblock The reduction of a graph to canonical form and the algebra which
  appears therein.
\newblock {\em nti, Series}, 2(9):12--16, 1968.

\bibitem{xu2018powerful}
Keyulu Xu, Weihua Hu, Jure Leskovec, and Stefanie Jegelka.
\newblock How powerful are graph neural networks?
\newblock In {\em 7th International Conference on Learning Representations,
  {ICLR} 2019, New Orleans, LA, USA, May 6-9, 2019}. OpenReview.net, 2019.
\newblock URL: \url{https://openreview.net/forum?id=ryGs6iA5Km}.

\end{thebibliography}

\end{document}